\documentclass{article}

\usepackage[final]{arxiv}

\usepackage{amsmath, amssymb,amsthm}
\usepackage[ruled,vlined,linesnumbered]{algorithm2e}
\usepackage{multirow}
\newtheorem{theorem}{Theorem}
\newtheorem{lemma}{Lemma}
\newtheorem{remark}{Remark}
\newtheorem{proposition}{Proposition}

\usepackage{graphicx}
\usepackage{caption}
\usepackage{subcaption}
\usepackage{tikz}
\usepackage{pgfplots}
\usetikzlibrary{spy,calc}

\newif\ifblackandwhitecycle
\gdef\patternnumber{0}

\pgfkeys{/tikz/.cd,
    zoombox paths/.style={
        draw=orange,
        line width=0.5pt
    },
    black and white/.is choice,
    black and white/.default=static,
    black and white/static/.style={ 
        draw=white,   
        zoombox paths/.append style={
            draw=white,
            postaction={
                draw=black,
                loosely dashed
            }
        }
    },
    black and white/static/.code={
        \gdef\patternnumber{1}
    },
    black and white/cycle/.code={
        \blackandwhitecycletrue
        \gdef\patternnumber{1}
    },
    black and white pattern/.is choice,
    black and white pattern/0/.style={},
    black and white pattern/1/.style={    
            draw=white,
            postaction={
                draw=black,
                dash pattern=on 2pt off 2pt
            }
    },
    black and white pattern/2/.style={    
            draw=white,
            postaction={
                draw=black,
                dash pattern=on 4pt off 4pt
            }
    },
    black and white pattern/3/.style={    
            draw=white,
            postaction={
                draw=black,
                dash pattern=on 4pt off 4pt on 1pt off 4pt
            }
    },
    black and white pattern/4/.style={    
            draw=white,
            postaction={
                draw=black,
                dash pattern=on 4pt off 2pt on 2 pt off 2pt on 2 pt off 2pt
            }
    },
    zoomboxarray inner gap/.initial=5pt,
    zoomboxarray columns/.initial=2,
    zoomboxarray rows/.initial=2,
    subfigurename/.initial={},
    figurename/.initial={zoombox},
    zoomboxarray/.style={
        execute at begin picture={
            \begin{scope}[
                spy using outlines={%
                    zoombox paths,
                    width=\imagewidth*1.05 / \pgfkeysvalueof{/tikz/zoomboxarray columns} - (\pgfkeysvalueof{/tikz/zoomboxarray columns} - 1) / \pgfkeysvalueof{/tikz/zoomboxarray columns} * \pgfkeysvalueof{/tikz/zoomboxarray inner gap} -\pgflinewidth,
                    height=\imageheight / \pgfkeysvalueof{/tikz/zoomboxarray rows} - (\pgfkeysvalueof{/tikz/zoomboxarray rows} - 1) / \pgfkeysvalueof{/tikz/zoomboxarray rows} * \pgfkeysvalueof{/tikz/zoomboxarray inner gap}-\pgflinewidth,
                    magnification=3,
                    every spy on node/.style={
                        zoombox paths
                    },
                    every spy in node/.style={
                        zoombox paths
                    }
                }
            ]
        },
        execute at end picture={
    \end{scope}
    \gdef\patternnumber{0}
},
        spymargin/.initial=0.5em,
        zoomboxes xshift/.initial=1,
        zoomboxes right/.code=\pgfkeys{/tikz/zoomboxes xshift=1},
        zoomboxes left/.code=\pgfkeys{/tikz/zoomboxes xshift=-1},
        zoomboxes yshift/.initial=0,
        zoomboxes above/.code={
            \pgfkeys{/tikz/zoomboxes yshift=1},
            \pgfkeys{/tikz/zoomboxes xshift=0}
        },
        zoomboxes below/.code={
            \pgfkeys{/tikz/zoomboxes yshift=-1},
            \pgfkeys{/tikz/zoomboxes xshift=0}
        },
        caption margin/.initial=4ex,
    },
    adjust caption spacing/.code={},
    image container/.style={
        inner sep=0pt,
        at=(image.north),
        anchor=north,
        adjust caption spacing
    },
    zoomboxes container/.style={
        inner sep=0pt,
        at=(image.north),
        anchor=north,
        name=zoomboxes container,
        xshift=\pgfkeysvalueof{/tikz/zoomboxes xshift}*(\imagewidth+\pgfkeysvalueof{/tikz/spymargin}),
        yshift=\pgfkeysvalueof{/tikz/zoomboxes yshift}*(\imageheight+\pgfkeysvalueof{/tikz/spymargin}+\pgfkeysvalueof{/tikz/caption margin}),
        adjust caption spacing
    },
    calculate dimensions/.code={
        \pgfpointdiff{\pgfpointanchor{image}{south west} }{\pgfpointanchor{image}{north east} }
        \pgfgetlastxy{\imagewidth}{\imageheight}
        \global\let\imagewidth=\imagewidth
        \global\let\imageheight=\imageheight
        \gdef\columncount{2}
        \gdef\rowcount{2}
        
    },
    image node/.style={
        inner sep=0pt,
        name=image,
        anchor=south west,
        append after command={
            [calculate dimensions]
            node [image container,subfigurename=\pgfkeysvalueof{/tikz/figurename}-image] {\phantomimage}
            node [zoomboxes container,subfigurename=\pgfkeysvalueof{/tikz/figurename}-zoom] {\phantomimage}
        }
    },
    color code/.style={
        zoombox paths/.append style={draw=#1}
    },
    connect zoomboxes/.style={
    spy connection path={\draw[draw=none,zoombox paths] (tikzspyonnode) -- (tikzspyinnode);}
    },
    help grid code/.code={
        \begin{scope}[
                x={(image.south east)},
                y={(image.north west)},
                font=\footnotesize,
                help lines,
                overlay
            ]
            \foreach \x in {0,1,...,9} { 
                \draw(\x/10,0) -- (\x/10,1);
                \node [anchor=north] at (\x/10,0) {0.\x};
            }
            \foreach \y in {0,1,...,9} {
                \draw(0,\y/10) -- (1,\y/10);                        \node [anchor=east] at (0,\y/10) {0.\y};
            }
        \end{scope}    
    },
    help grid/.style={
        append after command={
            [help grid code]
        }
    },
}

\newcommand\phantomimage{%
    \phantom{%
        \rule{\imagewidth}{\imageheight}%
    }%
}
\newcommand\zoombox[2][]{
    \begin{scope}[zoombox paths]
        \pgfmathsetmacro\xpos{
            (\columncount-1)*(\imagewidth / \pgfkeysvalueof{/tikz/zoomboxarray columns} + \pgfkeysvalueof{/tikz/zoomboxarray inner gap} / \pgfkeysvalueof{/tikz/zoomboxarray columns} ) + \pgflinewidth
        }
        \pgfmathsetmacro\ypos{
            (\rowcount-1)*( \imageheight / \pgfkeysvalueof{/tikz/zoomboxarray rows} + \pgfkeysvalueof{/tikz/zoomboxarray inner gap} / \pgfkeysvalueof{/tikz/zoomboxarray rows} ) + 0.5*\pgflinewidth
        }
        \edef\dospy{\noexpand\spy [
            #1,
            zoombox paths/.append style={
                black and white pattern=\patternnumber
            },
            every spy on node/.append style={#1},
            x=\imagewidth,
            y=\imageheight
        ] on (#2) in node [anchor=north west] at ($(zoomboxes container.north west)+(\xpos pt,-\ypos pt)$);}
        \dospy
        \pgfmathtruncatemacro\pgfmathresult{ifthenelse(\columncount==\pgfkeysvalueof{/tikz/zoomboxarray columns},\rowcount+1,\rowcount)}
        \global\let\rowcount=\pgfmathresult
        \pgfmathtruncatemacro\pgfmathresult{ifthenelse(\columncount==\pgfkeysvalueof{/tikz/zoomboxarray columns},1,\columncount+1)}
        \global\let\columncount=\pgfmathresult
        \ifblackandwhitecycle
            \pgfmathtruncatemacro{\newpatternnumber}{\patternnumber+1}
            \global\edef\patternnumber{\newpatternnumber}
        \fi
    \end{scope}
}

\usepackage{graphicx}
\usepackage{booktabs}
\usepackage{subcaption}
\usepackage{tikz}
\usepackage{xcolor}
\usepackage{rotating}
\usepackage{diagbox}
\usepackage{booktabs} 
\usepackage{multirow}
\usepackage[table]{xcolor}
\usepackage{stfloats}
\newcommand\blfootnote[1]{%
  \begingroup
  \renewcommand{\thefootnote}{}%
  \footnote{#1}%
  \addtocounter{footnote}{-1}
  \endgroup
}
\definecolor{cvprblue}{rgb}{0.21,0.49,0.74}
\usepackage[pagebackref,breaklinks,colorlinks,allcolors=cvprblue]{hyperref}
\title{Fractional-gradient Sparsity with Autoencoding Sequential Deep Image Prior for 3D CT Reconstruction}

\author{Haijie Yuan$^1$, Chaoyan Huang$^{1,2}$, Srijita Bandopadhyay$^1$, \\Liyue Shen$^2$, Saiprasad Ravishankar$^{1,3}$\\
$^{1}$Department of Computational Mathematics, Science, \& Engineering, Michigan State University  
$^{2}$Department of Electrical Engineering and Computer Science, University of Michigan\\
$^{3}$Department of Biomedical Engineering, Michigan State University \\
{\tt\small \{yuanhai1, huang345, bandopa1, ravisha3\}@msu.edu}\\
{\tt\small \{chaoyanh, liyues\}@umich.edu}
}

\begin{document}

\maketitle

\begin{abstract}
3D volumetric reconstruction from incomplete or noisy measurements is a fundamental problem in medical imaging and computational tomography. Deep image prior (DIP)–based methods have recently shown strong capability for solving inverse problems without requiring large training datasets. However, directly extending DIP to 3D reconstruction by fully 3D networks can incur high computational cost, while slice-by-slice 2D DIP approaches may lead to inter-slice inconsistencies due to the lack of explicit regularization along the third direction. 
In this paper, we propose a novel volumetric reconstruction framework, \textbf{F}ractional-gradient \textbf{A}utoencoding \textbf{S}equential \textbf{T}omography \textbf{DIP} (FAST-DIP), which integrates input-adaptive sequential deep image prior modeling of slices with fractional sparsity regularization to capture inter-slice dependencies.
Specifically, we introduce a fractional $\ell_1/\ell_2$-based sparsity prior on the gradients along the slice ($z$) direction to explicitly enforce inter-slice structural consistency. 
We further provide theoretical analysis of the proposed alternating minimization algorithm under the majorization–minimization (MM) framework, establishing monotonic descent of the objective function and convergence to a critical point under the Kurdyka–Łojasiewicz (KL) property.
Experimental results for 3D X-ray computed tomography (CT) reconstruction demonstrate that the proposed method improved reconstruction quality and structural consistency compared with existing DIP-based approaches.\blfootnote{DARPA Distribution Statement A. Approved for public release: distribution is
unlimited.}
\end{abstract}    
\section{Introduction}
\label{sec:intro}


X-ray computed tomography (CT) is widely used in medical, security, and scientific applications, and for nondestructive testing.
3D volumetric reconstruction from incomplete or noisy measurements is a central problem in 
CT~\cite{wang2023review}. The ill-posed nature of tasks such as sparse-view or limited-angle CT makes regularization essential for recovering diagnostically or physically meaningful volumes \cite{wang2021limited,cai2024structure}. In recent years, unsupervised methods based on the deep image prior (DIP) \cite{ulyanov2018deep} have demonstrated remarkable ability to regularize inverse problems without requiring large supervised datasets.
Obtaining large or well-matched datasets for training deep neural networks can be challenging in several scientific and health settings.

Several studies have extended the DIP framework to volumetric reconstruction tasks. For example, Gong et al. \cite{gong2018pet} incorporated DIP into an iterative reconstruction pipeline for 3D PET imaging. More recently, Hashimoto et al. \cite{hashimoto2023fully} implemented a fully 3D DIP-based network that directly parameterizes volumetric images. Other works further combine DIP with classical regularization strategies, such as sparsity-based priors \cite{han2023deep} or low-rank regularization \cite{chen2025multi}, in order to improve reconstruction quality and structural consistency. Although these approaches can achieve promising performance, fully 3D networks often suffer from high computational cost and memory consumption.

An alternative strategy is to apply 2D DIP slice-by-slice to reconstruct a 3D volume. For instance, Xue et al. \cite{xue2024hybrid} reconstructed 3D volumes by applying standard 2D DIP independently to each slice. However, purely slice-wise reconstruction may lead to inter-slice inconsistencies and lack explicit regularization along the slice direction. To alleviate this issue, Xue et al. \cite{xue2024hybrid} further introduced a 3D total variation (TV) regularization that enforces smoothness across slices.
In practice, medical images often contain rich structural details within each slice. Applying isotropic 3D regularization may therefore oversmooth fine intra-slice textures while enforcing volumetric consistency \cite{young2024fully}. 

DIP methods typically suffer from overfitting to noise/artifacts due to being trained directly with corrupted or limited measurements.
Recently, autoencoding sequential deep image prior (ASeqDIP) \cite{alkhouri2024image} was proposed that alternates between updating the DIP network parameters with an autoencoding regularization that delays overfitting and a gradient-free input update step that further denoises the network input (an image).
However, AseqDIP has not been applied to 3D reconstruction.
One approach is to apply it to 2D slices.
To impose structural sparsity along the third direction, several works impose regularization only along the slice axis to capture correlations between neighboring slices. 
In particular, z-axis TV regularization has been adopted to encourage inter-slice consistency in volumetric reconstruction~\cite{chung2023solving}. However, classical TV regularization is sensitive to the scale of gradients and may introduce staircase artifacts or overly smooth subtle inter-slice variations. These limitations motivate the development of priors that can more effectively model sparse inter-slice structures while remaining robust to gradient magnitude variations.

Classical model-based reconstruction methods incorporate explicit structural priors, such as Tikhonov regularization \cite{pan2020constrained} and nuclear norm regularization \cite{huang2023single}, to encode prior knowledge about the geometry of volumetric data. Inspired by these observations, we introduce a fractional sparsity prior based on the $\ell_1/\ell_2$ ratio. As shown in \cite{wang2021limited}, the $\ell_1/\ell_2$ fraction-based regularization promotes structured sparsity by encouraging many gradients to approach zero while avoiding the trivial shrinkage of the gradient field that may occur with standard $\ell_1$ penalties.

In this paper, we propose a novel 3D volumetric reconstruction framework, \textbf{F}ractional-gradient \textbf{A}utoencoding \textbf{S}equential \textbf{T}omography \textbf{DIP} (FAST-DIP), that integrates 2D deep image prior with fractional sparsity regularization.
Specifically, we leverage the sequential input-adaptive modeling strategy of ASeqDIP over 2D slices.
In addition, to explicitly enforce inter-slice structural consistency, we introduce an $\ell_1/\ell_2$ regularization on the gradients along the slice direction.
The main contributions of this work are summarized as follows:

\begin{itemize}

\item We introduce an $\ell_1/\ell_2$-based fractional sparsity regularization to model sparse variations along the slice direction. The proposed prior is less sensitive to gradient magnitude and better preserves subtle inter-slice structural variations.

\item We incorporate the proposed fractional sparsity prior into a DIP-based reconstruction framework with sequential slice modeling, enabling the network to reconstruct slices while explicitly enforcing structural consistency across neighboring slices.


\item We provide theoretical guarantees for the proposed alternating optimization algorithm under the majorization–minimization (MM) framework. In particular, we establish monotonic descent of the objective function and prove subsequence convergence to a critical point based on the Kurdyka–Łojasiewicz (KL) property.

\item Experiments for challenging 3D CT reconstruction tasks demonstrate that the proposed method achieves improved reconstruction quality compared with other DIP-based approaches.

\end{itemize}
\section{Related Works}
\label{sec:related}

\subsection{ASeqDIP}

The Autoencoding Sequential Deep Image Prior (ASeqDIP) \cite{alkhouri2024image} extends the classical Deep Image Prior (DIP) \cite{ulyanov2018deep} framework by incorporating sequential input adaptation during reconstruction. Consider an inverse imaging problem where the observation $\textbf{y} \in \mathbb{R}^m$ is generated through a forward operator $\textbf{A}$ as:
\begin{equation}
\textbf{y} = \textbf{A} \textbf{x} + \textbf{n},
\end{equation}
where $\textbf{x} \in \mathbb{R}^n$ (discretized) denotes the unknown clean image and $\textbf{n}$ represents measurement noise. 
In the original DIP formulation, the reconstruction is produced by a randomly initialized neural network $f_{\phi}(\cdot)$ that maps a fixed random input $\textbf{z}$ to an image estimate. The network parameters are optimized by minimizing a data consistency objective:
\begin{equation}
\min_{\phi} \; \| \mathbf{A}f_{\phi}\mathbf{(z)} - \mathbf{y} \|_2^2.
\end{equation}

Recent work has shown that while DIP provides a strong implicit prior, its optimization can become unstable and overfit measurement noise \cite{liang2025analysis}, motivating additional regularization and stabilization strategies \cite{cheng2019bayesian,wang2023early}. 
ASeqDIP \cite{alkhouri2024image} used an autoencoding constraint together with sequential input refinement to improve the DIP method.
It introduced a sequence of progressively refined inputs. In stage $t$, the network generates an intermediate reconstruction $f_{\phi}(\mathbf{z}_t)$ 
and the optimization objective becomes:
\begin{equation}
\min_{\phi}
\|\textbf{A} f_{\phi}(\textbf{z}_t)- \textbf{y}\|_2^2
+
\lambda \|f_{\phi}(\textbf{z}_t) - \textbf{z}_t\|_2^2.
\end{equation}
The additional autoencoding regularization term encourages the output to remain close to its input, preventing unstable updates while guiding the reconstruction toward measurement-consistent structures.
After each stage, the output (with refined network) is used to update the next input as:
\begin{equation}
\mathbf{z}_{t+1} = \textbf{x}_t = f_{\phi}(\mathbf{z}_t). 
\end{equation}
Changing between network optimization and input refinement, ASeqDIP produces a sequence of progressively improved or denoised reconstructions. This sequential strategy helps mitigate the overfitting behavior commonly observed in standard DIP optimization \cite{wang2023early}.

\subsection{Fractional-Gradient Sparsity Prior}
Sparsity-promoting regularization plays an important role in inverse imaging problems. Classical approaches such as $\ell_1$ regularization and total variation (TV) enforce the sparsity of image gradients and have been widely used for image reconstruction tasks \cite{ritschl2011improved,huang2024edge,wu2024medical}. In particular, the TV seminorm can be interpreted as the $\ell_1$ norm of the image gradient \cite{wang2022minimizing}. It is well known that the $\ell_1$ norm serves as the tightest convex relaxation of the $\ell_0$ norm, which directly measures sparsity of signals. However, TV-based regularization is sensitive to the scale of gradients and may introduce staircase artifacts when modeling piecewise-smooth structures \cite{jiang2023inexact}.

To better approximate the $\ell_0$ norm, several alternative sparsity-promoting regularizations have been proposed, including nonconvex $\ell_p$ penalties with $0<p<1$ \cite{wu2021two} and fractional sparsity formulations such as the $\ell_1/\ell_2$ ratio \cite{wang2022minimizing}. In particular, the $\ell_1/\ell_2$ regularization has been shown to effectively mitigate the staircasing artifacts commonly produced by TV, since the $\ell_2$ norm of the gradient in the denominator discourages trivial shrinkage of the entire gradient field \cite{wang2021limited}. Moreover, the $\ell_1/\ell_2$ penalty promotes structured sparsity by encouraging many gradient components to approach zero while preserving the relative magnitude of significant gradients. Due to its scale-invariant property, this type of regularization has demonstrated advantages in various inverse problems, including compressed sensing and image restoration.

Motivated by these observations, we adopt a fractional-gradient sparsity prior to model sparse variations along the slice direction in volumetric reconstruction. The proposed regularization focuses on the sparsity of inter-slice gradients, which helps preserve intra-slice structural details while enforcing structural consistency across neighboring slices.

\section{Method}
\label{sec:method}
The overall objective function of the proposed FAST-DIP is defined as follows:
\begin{equation}
    \min_{\phi,\mathbf{z}} \mathcal{L}(\phi, \mathbf{z}) = F(\phi, \mathbf{z}) + G(\mathbf{z}),
\end{equation}
where $F(\phi, \mathbf{z})$ encompasses the data fidelity and the autoencoding sequential prior, and $G(\mathbf{z})$ is the fractional gradient sparsity prior:
\begin{subequations}
\begin{equation}
    F(\phi, \mathbf{z}) = \left\|\mathbf{A} f_{\phi}(\mathbf{z})-\mathbf{y}\right\|_2^2 + \lambda\Vert\mathbf{z}-f_{\phi}(\mathbf{z})\Vert_2^2,
\end{equation}
\begin{equation}
    G(\mathbf{z}) = \gamma\frac{\Vert\nabla \mathbf{z}\Vert_1}{\Vert\nabla \mathbf{z}\Vert_2}, 
\end{equation}
\end{subequations}
where $f_{\phi}(\cdot)$ acts on each 2D x-y slice of a 3D volume and $\nabla$ is the gradient along the $z$-axis. 
In contrast to ASeqDIP that avoided gradient-based input updates and used a simple forward pass to update the input, here we optimize over the input $\mathbf{z}$ as well in the loss function and gradients will be used for its update given the additional regularization.

To address the non-differentiability of the $\ell_1$ norm at zero and the potential zero-denominator in the $\ell_2$ norm, we introduce smoothing parameters $\epsilon, \delta > 0$ to define a smoothed regularization term: 
\begin{equation}
    G_{\epsilon, \delta}(\mathbf{z}) = \gamma \frac{\sum_i \sqrt{|(\nabla \mathbf{z})_i|^2 + \delta}}{\|\nabla \mathbf{z}\|_2 + \epsilon}.
\end{equation}
Due to the concavity of the square root function, the numerator can be upper-bounded by a quadratic function via the first-order Taylor expansion. By concurrently fixing the denominator at a current iterate $\mathbf{z}^{k-1}$, the explicit form of the quadratic surrogate function $S(\mathbf{z} | \mathbf{z}^{k-1})$ is formulated as:
\begin{equation}\label{seq}
    S(\mathbf{z} | \mathbf{z}^{k-1}) = \frac{\gamma}{2 M^{k-1}} (\nabla \mathbf{z})^T \mathbf{W}^{k-1} (\nabla \mathbf{z}) + C,
\end{equation}
where $C$ is a constant independent of $\mathbf{z}$, the scalar $M^{k-1} = \|\nabla \mathbf{z}^{k-1}\|_2 + \epsilon$ freezes the denominator, and $\mathbf{W}^{k-1}$ is a diagonal weight matrix with entries $\mathbf{W}_{i,i}^{k-1} = 1 / \sqrt{|(\nabla \mathbf{z}^{k-1})_i|^2 + \delta}$. 

Instead of utilizing proximal operators which are intractable for the fractional term, we employ an alternating gradient descent approach based on the Majorization-Minimization (MM) framework to solve the smoothed objective $\mathcal{L}_{\epsilon, \delta}(\phi, \mathbf{z})$. 

At the $k$-th iteration, the alternating update rules are formulated as:
\begin{equation}
    \begin{aligned}
        \phi^k &= \phi^{k-1} - \alpha\nabla_{\phi}F(\phi^{k-1},\mathbf{z}^{k-1}),\\
        \mathbf{z}^k &= \mathbf{z}^{k-1} - \beta \nabla_{\mathbf{z}} \mathcal{L}_{s}(\phi^k, \mathbf{z} | \mathbf{z}^{k-1}),
    \end{aligned}
\end{equation}
where $\alpha$ and $\beta$ are the stepsizes, and $\mathcal{L}_{s}(\phi, \mathbf{z} | \mathbf{z}^{k-1}) = F(\phi, \mathbf{z}) + S(\mathbf{z} | \mathbf{z}^{k-1})$ is the surrogate objective function evaluated at the current point $\mathbf{z}^{k-1}$. The overall procedure is summarized in Algorithm \ref{algo:mm_agd}.

\begin{algorithm*}[h]
\caption{MM-based Alternating Gradient Descent Algorithm}
\label{algo:mm_agd}
\SetKwInOut{Input}{Input}
\SetKwInOut{Output}{Output}
\SetKwComment{Comment}{$\triangleright$\ }{}

\Input{~Measurements $\mathbf{y}$, forward operator $\mathbf{A}$, neural network model $f_{\phi}$.}
\Input{~Regularization parameters $\lambda, \gamma > 0$, smoothing parameters $\epsilon, \delta > 0$ (e.g., $10^{-6}$).}
\Input{~Stepsizes $\alpha$ (for $\phi$) and $\beta$ (for $\mathbf{z}$).}
\Input{~Initialization $\phi^0$ and $\mathbf{z}^0$.}

$k \leftarrow 1$\;
\While{stopping criterion is not met}{
    \vspace{0.1cm}
    \Comment{Step 1: Update network parameters $\phi$}
    Compute the gradient with respect to $\phi$:\\
    $\nabla_{\phi} F \leftarrow \nabla_{\phi} \left( \|\mathbf{A} f_{\phi^{k-1}}(\mathbf{z}^{k-1}) - \mathbf{y}\|_2^2 + \lambda \|\mathbf{z}^{k-1} - f_{\phi^{k-1}}(\mathbf{z}^{k-1})\|_2^2 \right)$\;
    $\phi^k \leftarrow \phi^{k-1} - \alpha \nabla_{\phi} F$\;
    
    \vspace{0.1cm}
    \Comment{Step 2: Construct the quadratic surrogate for the fractional term}
    Compute the spatial gradients of the current image:\\
    $\mathbf{g}^{k-1} \leftarrow \nabla \mathbf{z}^{k-1}$\;
    Fix the $L_2$ norm denominator as a constant scalar:\\
    $M^{k-1} \leftarrow \|\mathbf{g}^{k-1}\|_2 + \epsilon$\;
    Construct the diagonal reweighting matrix $\mathbf{W}^{k-1}$ with diagonal entries:\\
    $\mathbf{W}_{i,i}^{k-1} \leftarrow \frac{1}{\sqrt{|\mathbf{g}_i^{k-1}|^2 + \delta}}$\;
    
    \vspace{0.1cm}
    \Comment{Step 3: Update image variable $\mathbf{z}$}
    Compute the gradient of the surrogate fractional term:\\
    $\mathbf{v}^{k-1} \leftarrow \frac{\gamma}{M^{k-1}} \nabla^T \left( \mathbf{W}^{k-1} \nabla \mathbf{z}^{k-1} \right)$\;
    Compute the gradient of the smooth data fidelity term:\\
    $\nabla_{\mathbf{z}} F \leftarrow \nabla_{\mathbf{z}} \left( \|\mathbf{A} f_{\phi^k}(\mathbf{z}) - \mathbf{y}\|_2^2 + \lambda \|\mathbf{z} - f_{\phi^k}(\mathbf{z})\|_2^2 \right) \Big|_{\mathbf{z}=\mathbf{z}^{k-1}}$\;
    Perform gradient descent on the surrogate objective:\\
    $\mathbf{z}^k \leftarrow \mathbf{z}^{k-1} - \beta \left( \nabla_{\mathbf{z}} F + \mathbf{v}^{k-1} \right)$\;
    
    $k \leftarrow k + 1$\;
}
\Output{Optimized network parameters $\phi^*$ and reconstructed image $\mathbf{z}^*$.}
\end{algorithm*}

\begin{remark}[Relation to Weighted Quadratic Regularization]
At each iteration, the surrogate function $S(\mathbf{z} \mid \mathbf{z}^{k-1})$ takes the form of a weighted quadratic gradient penalty \eqref{seq}, which resembles a Tikhonov regularization with spatially varying weights. 
However, it is important to emphasize that the weights $W^{k-1}$ are \emph{iteratively updated} based on the previous iterate $\mathbf{z}^{k-1}$, rather than being fixed. As a result, the overall optimization process does not correspond to a static quadratic regularization. Instead, it follows an iteratively reweighted least squares (IRLS) scheme, which is well known to approximate non-quadratic sparsity-inducing penalties.
\end{remark}
\section{Theoretical Analysis}
\label{sec:theoral}
In this section, we provide theoretical guarantees for the proposed alternating minimization algorithm under the Majorization-Minimization (MM) framework.

\begin{lemma}[Majorization of the Smoothed Numerator]\label{lemma1}
Let $
N_{\delta}(\mathbf{z}) = \sum_i \sqrt{|(\nabla \mathbf{z})_i|^2 + \delta}$, and for any $\mathbf{z}$ and fixed $\mathbf{z}^{k-1}$, the constructed quadratic surrogate $\tilde{N}_{\delta}(\mathbf{z} | \mathbf{z}^{k-1})$ satisfies the tangency and upper-bound conditions:
\begin{equation}
\tilde{N}_{\delta}(\mathbf{z}^{k-1} | \mathbf{z}^{k-1}) = N_{\delta}(\mathbf{z}^{k-1}), 
~~
\tilde{N}_{\delta}(\mathbf{z} | \mathbf{z}^{k-1}) \ge N_{\delta}(\mathbf{z}), \ \forall \mathbf{z}.
\end{equation}

\end{lemma}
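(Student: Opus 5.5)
The plan is to reduce the statement to a scalar inequality for the concave function $h(t)=\sqrt{t+\delta}$ on $t\ge 0$, and then sum over the entries of $\nabla\mathbf{z}$.

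\textbf{Step 1: the surrogate.} Write $g_i=(\nabla\mathbf{z})_i$ and $g_i^{k-1}=(\nabla\mathbf{z}^{k-1})_i$, so that $N_\delta(\mathbf{z})=\sum_i h(|g_i|^2)$. The construction in \eqref{seq} is the first-order Taylor expansion of $h$ at $t_i^{k-1}=|g_i^{k-1}|^2$. We therefore take
\begin{equation*}
\tilde N_\delta(\mathbf{z}\mid\mathbf{z}^{k-1})=\sum_i\Big[\sqrt{|g_i^{k-1}|^2+\delta}+\frac{|g_i|^2-|g_i^{k-1}|^2}{2\sqrt{|g_i^{k-1}|^2+\delta}}\Big],
\end{equation*}
which equals $\tfrac12(\nabla\mathbf{z})^T\mathbf{W}^{k-1}(\nabla\mathbf{z})+C'$ with $\mathbf{W}^{k-1}$ the diagonal weight matrix defined above. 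Multiplying by $\gamma/M^{k-1}$ recovers $S(\mathbf{z}\mid\mathbf{z}^{k-1})$. The constant $C'$ depends only on $\mathbf{z}^{k-1}$, and I will check that it is consistent with the constant $C$ in \eqref{seq}.

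\textbf{Step 2: tangency.} Setting $\mathbf{z}=\mathbf{z}^{k-1}$ makes each linear correction term vanish. What remains is $\sum_i\sqrt{|g_i^{k-1}|^2+\delta}=N_\delta(\mathbf{z}^{k-1})$, which gives the equality.

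\textbf{Step 3: upper bound.} Since $\delta>0$, $h$ is differentiable and concave on $[0,\infty)$, because $h''(t)=-\tfrac14(t+\delta)^{-3/2}<0$. Concavity gives the tangent-line inequality $h(t)\le h(s)+h'(s)(t-s)$ for all $t,s\ge0$. Applying it with $t=|g_i|^2$ and $s=|g_i^{k-1}|^2$ and summing over $i$ yields $N_\delta(\mathbf{z})\le\tilde N_\delta(\mathbf{z}\mid\mathbf{z}^{k-1})$ for every $\mathbf{z}$.

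As a self-contained alternative to invoking concavity, set $a=\sqrt{t+\delta}$ and $b=\sqrt{s+\delta}$, both positive. The desired inequality $a\le b+\frac{a^2-b^2}{2b}$ is equivalent to $2ab\le a^2+b^2$, that is, $(a-b)^2\ge0$.

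\textbf{Main obstacle.} There is no genuine analytic difficulty here. The only care needed is bookkeeping: making sure the surrogate $\tilde N_\delta$ is defined precisely so that it matches the weights $\mathbf{W}^{k-1}$ and the constant in \eqref{seq}, and noting that $\delta>0$ keeps the weights finite and $h$ differentiable at $0$.
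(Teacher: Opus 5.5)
Your proposal is correct and follows the paper's proof essentially step for step. You apply the tangent-line inequality for the concave $h(t)=\sqrt{t+\delta}$ at $t_0=|(\nabla\mathbf{z}^{k-1})_i|^2$, sum over $i$ with the $\mathbf{z}$-independent terms collected into a constant ($C^{k-1}$ in the paper), and get tangency because the linear correction vanishes at $\mathbf{z}=\mathbf{z}^{k-1}$. Your $(a-b)^2\ge 0$ check is just an elementary verification of the same scalar inequality that the paper derives from $h''<0$.
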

\begin{remark}
Using the frozen denominator
\begin{equation}
M^{k-1} = \|\nabla \mathbf{z}^{k-1}\|_2 + \epsilon,
\end{equation}
the practical surrogate used in our algorithm is
\begin{equation}
S(\mathbf{z}\mid \mathbf{z}^{k-1})
=
\frac{\gamma}{M^{k-1}}
\tilde N_{\delta}(\mathbf{z}\mid \mathbf{z}^{k-1}).
\end{equation}
We emphasize that this construction majorizes the smoothed numerator, while the denominator is frozen at the previous iterate. Therefore, $S(\mathbf{z}\mid \mathbf{z}^{k-1})$ should be interpreted as an IRLS-type surrogate for the original fractional prior, rather than a global majorizer of the full ratio functional $G_{\epsilon,\delta}(\mathbf{z})$.
\end{remark}

\begin{lemma}[Lipschitz Continuous Gradient]\label{lemma2}
Assume that for any fixed $\phi^k$, the mapping $f_{\phi^k}(\mathbf z)$ is continuously differentiable and has bounded Jacobian and Hessian on the iterate sequence.
Then the gradient of the surrogate objective with respect to $\mathbf z$, defined as
\[
\nabla_{\mathbf z} \mathcal{L}_s(\phi^k, \mathbf z \mid \mathbf z^{k-1}),
\]
is $L$-Lipschitz continuous on the bounded domain of iterates. That is, there exists a constant $L > 0$ such that for any $\mathbf z_1, \mathbf z_2$:
\begin{equation}
\|\nabla_{\mathbf z} \mathcal{L}_s(\phi^k, \mathbf z_1 \mid \mathbf z^{k-1})
-
\nabla_{\mathbf z} \mathcal{L}_s(\phi^k, \mathbf z_2 \mid \mathbf z^{k-1})\|_2
\le
L \|\mathbf z_1 - \mathbf z_2\|_2.
\end{equation}
\end{lemma}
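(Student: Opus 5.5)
The plan is to split the surrogate objective into its two pieces, $\mathcal{L}_s(\phi^k,\mathbf z\mid\mathbf z^{k-1}) = F(\phi^k,\mathbf z) + S(\mathbf z\mid\mathbf z^{k-1})$, show that each gradient is Lipschitz, and add the constants via the triangle inequality. The surrogate term is the easy one. By \eqref{seq}, $S$ is a quadratic form in $\mathbf z$ with gradient $\nabla_{\mathbf z} S = \frac{\gamma}{M^{k-1}}\nabla^T\mathbf W^{k-1}\nabla\mathbf z$, which is linear in $\mathbf z$. Its Lipschitz constant is therefore the operator norm $\frac{\gamma}{M^{k-1}}\|\nabla^T\mathbf W^{k-1}\nabla\|_2$. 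I would bound this uniformly in $k$ using three facts: $M^{k-1}\ge\epsilon$; $\mathbf W^{k-1}_{i,i}\le \delta^{-1/2}$; and $\|\nabla\|_2\le 2$ for the forward-difference operator along $z$. Together these give $L_S \le 4\gamma/(\epsilon\sqrt\delta)$, independent of the iterate.

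For $F$, I would write the gradient via the chain rule. Let $J(\mathbf z)$ denote the Jacobian of $f_{\phi^k}$ at $\mathbf z$. Then
\begin{equation*}
\nabla_{\mathbf z}F = 2J(\mathbf z)^T\mathbf A^T(\mathbf A f_{\phi^k}(\mathbf z)-\mathbf y) + 2\lambda (\mathbf I - J(\mathbf z))^T(\mathbf z - f_{\phi^k}(\mathbf z)).
\end{equation*}
The Lipschitz bound on the bounded domain follows from a product-rule estimate: $\|g_1h_1-g_2h_2\|\le \|g_1\|\|h_1-h_2\| + \|h_2\|\|g_1-g_2\|$. The ingredients come from the hypotheses as follows.
\begin{itemize}
\item The bounded Jacobian, $\|J\|\le B_J$, makes $f_{\phi^k}$ Lipschitz with constant $B_J$.
\item The bounded Hessian, $\le B_H$, makes $J$ Lipschitz with constant $B_H$.
\item Boundedness of the domain gives bounds on $\|f_{\phi^k}(\mathbf z)\|$ and $\|\mathbf z\|$, hence on the residuals $\mathbf A f-\mathbf y$ and $\mathbf z-f$.
\end{itemize}
Combining these yields a constant of the form
\[
L_F = 2\left(\|\mathbf A\|^2B_J^2 + B_H\|\mathbf A\|R_1 + \lambda(1+B_J)^2 + \lambda B_H R_2\right),
\]
where $R_1,R_2$ bound the two residuals. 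Setting $L=L_F+L_S$ completes the argument.

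The main obstacle is turning bounds that hold ``on the iterate sequence'' into Lipschitz bounds for arbitrary $\mathbf z_1,\mathbf z_2$. The mean value theorem needs the Jacobian and Hessian bounds along the whole segment between the two points, not only at the iterates. I would handle this by interpreting the hypothesis as holding on a convex compact set $\Omega$ containing all iterates, for instance the closed convex hull of a bounded neighborhood of them, and restricting $\mathbf z_1,\mathbf z_2\in\Omega$. On $\Omega$, continuity of the derivatives (for networks with smooth activations) gives the uniform bounds via compactness. I would state the lemma's conclusion with this interpretation made explicit.
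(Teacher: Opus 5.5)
Your proof is correct and follows the paper's route. Like the paper, you split $\mathcal L_s = F + S$ and bound the constant Hessian of the quadratic surrogate by $\frac{\gamma}{\epsilon\sqrt\delta}\|\nabla\|_2^2$ using $M^{k-1}\ge\epsilon$ and $\mathbf W^{k-1}_{i,i}\le\delta^{-1/2}$, and you control $F$ through the chain rule with the assumed Jacobian/Hessian bounds (you estimate $\nabla_{\mathbf z}F$ directly by a product rule rather than bounding $\nabla^2_{\mathbf z}F$, which is equivalent). Your explicit restriction to a convex compact set $\Omega$ containing the iterates is a worthwhile addition, since the paper's passage from derivative bounds ``on the bounded iterate set'' to a Lipschitz gradient silently needs those bounds along the segments between points.
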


\begin{proposition}[Monotonic Descent of the Surrogate Objective]\label{proposition1}
Let $\beta \le \frac{1}{L}$, where $L$ is the Lipschitz constant in Lemma \ref{lemma2}. 
Then the update
\[
\mathbf z^k
=
\mathbf z^{k-1}
-
\beta \nabla_{\mathbf z}
\mathcal{L}_s(\phi^k, \mathbf z \mid \mathbf z^{k-1})
\Big|_{\mathbf z = \mathbf z^{k-1}}
\]
guarantees a monotonic decrease of the surrogate objective:
\begin{equation}
\mathcal{L}_s(\phi^k, \mathbf z^k \mid \mathbf z^{k-1})
\le
\mathcal{L}_s(\phi^k, \mathbf z^{k-1} \mid \mathbf z^{k-1}).
\end{equation}
\end{proposition}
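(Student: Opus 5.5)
The plan is to apply the standard descent lemma for functions with Lipschitz continuous gradient to the map $h(\mathbf z) := \mathcal{L}_s(\phi^k, \mathbf z \mid \mathbf z^{k-1})$, with $\phi^k$ and $\mathbf z^{k-1}$ held fixed. By Lemma \ref{lemma2}, $\nabla h$ is $L$-Lipschitz on the bounded region containing the iterates. The first step is to record the quadratic upper bound
\begin{equation*}
h(\mathbf z_2) \le h(\mathbf z_1) + \langle \nabla h(\mathbf z_1), \mathbf z_2 - \mathbf z_1\rangle + \frac{L}{2}\|\mathbf z_2 - \mathbf z_1\|_2^2 .
\end{equation*}
This follows from the fundamental theorem of calculus along the segment $\mathbf z_1 + t(\mathbf z_2 - \mathbf z_1)$, $t\in[0,1]$, combined with Cauchy--Schwarz and the Lipschitz bound, which gives $\int_0^1 L t\|\mathbf z_2-\mathbf z_1\|^2\,dt$.

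Next, set $\mathbf z_1 = \mathbf z^{k-1}$ and $\mathbf z_2 = \mathbf z^k = \mathbf z^{k-1} - \beta \mathbf g$, where $\mathbf g = \nabla h(\mathbf z^{k-1})$. Substituting into the bound gives
\begin{equation*}
h(\mathbf z^k) \le h(\mathbf z^{k-1}) - \beta\Bigl(1 - \frac{L\beta}{2}\Bigr)\|\mathbf g\|_2^2 .
\end{equation*}
Since $0<\beta\le 1/L$, we have $1 - L\beta/2 \ge 1/2 > 0$. Hence $h(\mathbf z^k) \le h(\mathbf z^{k-1}) - \frac{\beta}{2}\|\mathbf g\|_2^2 \le h(\mathbf z^{k-1})$, which is the claimed inequality. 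It also yields a sufficient-decrease term that will be useful later for the convergence analysis.

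The delicate point is that Lemma \ref{lemma2} only asserts Lipschitz continuity on the bounded domain of iterates, while the descent lemma needs it on the whole segment $[\mathbf z^{k-1}, \mathbf z^k]$. I would handle this by taking that bounded domain to be a convex set (for instance, the convex hull or a ball containing the iterates and their connecting segments) on which the Jacobian and Hessian bounds of $f_{\phi^k}$ hold. The surrogate part $S(\cdot\mid\mathbf z^{k-1})$ is a fixed quadratic with Hessian $\frac{\gamma}{M^{k-1}}\nabla^T\mathbf W^{k-1}\nabla$, whose norm is bounded uniformly because $\mathbf W^{k-1}_{ii}\le \delta^{-1/2}$ and $M^{k-1}\ge\epsilon$. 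So only the $F$ part requires the convex-domain assumption, and with it the integral argument along the segment is justified. Apart from this domain issue, the argument is routine.
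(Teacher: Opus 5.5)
Your proposal is correct and follows essentially the same route as the paper. Both apply the descent lemma to the $L$-smooth map $\mathbf z\mapsto\mathcal{L}_s(\phi^k,\mathbf z\mid\mathbf z^{k-1})$, substitute the gradient step, and use $\beta\le 1/L$ to get $\beta - L\beta^2/2 \ge \beta/2>0$. Your extra remark is a real subtlety that the paper passes over when it applies the descent lemma ``for any $\mathbf u,\mathbf v$'': the Lipschitz bound must hold on the whole segment $[\mathbf z^{k-1},\mathbf z^k]$, so it is needed on a convex set containing consecutive iterates, not only at the iterates themselves.
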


\begin{theorem}[Subsequence Convergence under KL Framework]\label{theorem1}
Assume that the sequence $\{(\phi^k, \mathbf z^k)\}_{k=1}^\infty$ generated by the alternating updates is bounded, and that the surrogate-based updates satisfy a sufficient decrease condition and a relative error condition.
Since the data fidelity term, the autoencoding regularization, and the surrogate fractional prior are semi-algebraic functions, the overall objective satisfies the Kurdyka--{\L}ojasiewicz (KL) property.

Then the sequence $\{(\phi^k, \mathbf z^k)\}_{k=1}^\infty$ has finite length, and any accumulation point is a critical point of the surrogate-based alternating scheme.
\end{theorem}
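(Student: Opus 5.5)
The proof will follow the Attouch--Bolte--Svaiter template for descent methods on KL functions. We work with the Lyapunov function $\Psi(\phi,\mathbf z,\mathbf w) = F(\phi,\mathbf z) + S(\mathbf z\mid \mathbf w)$. Here $\mathbf w$ plays the role of the previous iterate, so that $\Psi(\phi^k,\mathbf z^k,\mathbf z^{k-1}) = \mathcal L_s(\phi^k,\mathbf z^k\mid\mathbf z^{k-1})$. It is also possible to take $\Psi$ to be the smoothed objective $\mathcal L_{\epsilon,\delta}$ directly. In either case, write $u^k = (\phi^k,\mathbf z^k)$. The two hypotheses of Theorem~\ref{theorem1} then read as follows.

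\textbf{Sufficient decrease (H1).} There is $a>0$ with
\[
\Psi(u^{k+1}) + a\|u^{k+1}-u^k\|_2^2 \le \Psi(u^k).
\]

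\textbf{Relative error (H2).} There are $w^{k+1}\in\partial\Psi(u^{k+1})$ and $b>0$ with
\[
\|w^{k+1}\|_2\le b\|u^{k+1}-u^k\|_2 .
\]

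Both are assumed in the statement. I will still indicate where they come from. The $\mathbf z$-part of H1 follows from Proposition~\ref{proposition1} via the descent lemma: with $\beta<1/L$ one gets the decrease $(\tfrac1\beta-\tfrac L2)\|\mathbf z^k-\mathbf z^{k-1}\|^2$. The $\phi$-part comes from a similar descent lemma for $F(\cdot,\mathbf z^{k-1})$ with $\alpha$ below the inverse of its Lipschitz constant. The passage from $\mathbf z^{k-1}$ to $\mathbf z^k$ as the surrogate anchor is handled by the tangency and majorization of Lemma~\ref{lemma1}. H2 follows because each partial gradient at the new point differs from the step direction by a difference of gradients evaluated at consecutive iterates, which is Lipschitz-bounded by Lemma~\ref{lemma2} and boundedness of the sequence.

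The core steps are the following.

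\textbf{(i) Limit points.} Boundedness gives a nonempty compact set $\omega(u^0)$ of accumulation points. By H1, $\Psi(u^k)$ is nonincreasing and bounded below, since $F\ge 0$ and $S\ge 0$. Hence it converges to some $\Psi^*$, and $\sum_k\|u^{k+1}-u^k\|^2<\infty$.

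\textbf{(ii) Constant value and criticality.} By continuity of $\Psi$, $\Psi\equiv\Psi^*$ on $\omega(u^0)$. H2 with $\|u^{k+1}-u^k\|\to0$, combined with closedness of the (limiting) subdifferential, shows that every accumulation point is critical.

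\textbf{(iii) Uniformized KL.} $\Psi$ is semi-algebraic: it is built from the squared norms, from the square roots in $S$ and $G_{\epsilon,\delta}$, and from the network, assuming piecewise-polynomial activations, which I will state explicitly. It therefore has the KL property with a concave desingularizer $\varphi$. By the uniformized KL lemma on the compact set $\omega(u^0)$, for $k$ large we have
\[
\varphi'\bigl(\Psi(u^k)-\Psi^*\bigr)\,\mathrm{dist}\bigl(0,\partial\Psi(u^k)\bigr)\ge1 .
\]

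\textbf{(iv) Finite length.} Combine (iii) with H2 and with concavity of $\varphi$, using H1. This gives
\[
\|u^{k+1}-u^k\|^2\le \tfrac{b}{a}\,\Delta_k\,\|u^k-u^{k-1}\|, \qquad \Delta_k=\varphi(\Psi(u^k)-\Psi^*)-\varphi(\Psi(u^{k+1})-\Psi^*).
\]
Then $2\sqrt{xy}\le x+y$ yields
\[
2\|u^{k+1}-u^k\|\le\|u^k-u^{k-1}\|+\tfrac ba\Delta_k .
\]
Summing and telescoping gives $\sum_k\|u^{k+1}-u^k\|<\infty$, which is finite length. Hence the sequence is Cauchy and the whole sequence converges to a single critical point. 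The case $\Psi(u^k)=\Psi^*$ for some finite $k$ is trivial and is disposed of first.

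The main obstacle is that the surrogate changes with $k$. The iterates are not a descent sequence for one fixed function unless the anchor is absorbed into the Lyapunov function or into $\mathcal L_{\epsilon,\delta}$. This absorption is delicate because the frozen denominator $M^{k-1}$ means $S$ does \emph{not} majorize the full ratio, as the remark after Lemma~\ref{lemma1} stresses. I would first try the augmented function $\Psi(\phi,\mathbf z,\mathbf w)$, with $\mathbf w^k = \mathbf z^{k-1}$ included in the state. H2 for the $\mathbf w$-block is then bounded by $\|\mathbf z^k-\mathbf z^{k-1}\|$ through smoothness of $M$ and $\mathbf W$, since $\epsilon,\delta>0$. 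Since the theorem explicitly assumes H1 and H2, the proof can rely on them and concentrate on steps (i)--(iv). The conclusion is that accumulation points are critical for this surrogate-based scheme, exactly as stated.
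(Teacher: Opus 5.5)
Your route is the paper's. The appendix also just invokes the abstract KL convergence result under assumed sufficient decrease and relative error, with KL taken from semi-algebraicity and accumulation points from boundedness. Your steps (i)--(iv) correctly unpack that citation, and given H1--H2 for one fixed continuous KL function they even give convergence of the whole sequence. You are also more careful than the paper in two respects: you insist on a single Lyapunov function (the paper applies KL to the $k$-dependent $\mathcal L_s(\cdot\mid\mathbf z^{k-1})$), and you state the activation assumption explicitly.

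However, your account of where H1 and H2 come from is false, and the error matters.

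\textbf{H1.} For $\Psi(\phi,\mathbf z,\mathbf w)=F(\phi,\mathbf z)+S(\mathbf z\mid\mathbf w)$, moving the anchor from $\mathbf z^{k-1}$ to $\mathbf z^k$ requires $N_\delta(\mathbf z^k)/M^k\le\tilde N_\delta(\mathbf z^k\mid\mathbf z^{k-1})/M^{k-1}$. Lemma 1 controls only the numerator, and nothing gives $M^{k-1}\le M^k$.

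\textbf{H2.} The $\mathbf w$-block is not a Lipschitz remainder. Since $\tilde N_\delta(\mathbf z\mid\cdot)$ is minimized at $\mathbf w=\mathbf z$ (Lemma 1 holds for any anchor), its $\mathbf w$-gradient vanishes there. The factor $1/M(\mathbf w)$, however, leaves
\[
\nabla_{\mathbf w}S(\mathbf z\mid\mathbf w)\big|_{\mathbf w=\mathbf z}=-\frac{\gamma\,N_\delta(\mathbf z)}{M(\mathbf z)^2}\,\frac{\nabla^\top\nabla\mathbf z}{\|\nabla\mathbf z\|_2},
\]
which is nonzero whenever $\nabla\mathbf z\neq 0$. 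H1 forces $\|\mathbf z^{k+1}-\mathbf z^k\|\to 0$, so H2 for this $\Psi$ would force $\nabla\bar{\mathbf z}=0$ at every limit point. In other words, it essentially never holds.

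\textbf{Fallback.} Taking $\Psi=\mathcal L_{\epsilon,\delta}$ fails for the same reason. The displayed term is exactly the quotient-rule part of $\nabla G_{\epsilon,\delta}$ that the frozen-denominator $\mathbf z$-step omits. Hence fixed points of the scheme, characterized by $\nabla_\phi F=0$ and $\nabla_{\mathbf z}F+\frac{\gamma}{M(\mathbf z)}\nabla^\top\mathbf W(\mathbf z)\nabla\mathbf z=0$ (with $M,\mathbf W$ built from $\mathbf z$), are not critical points of $\mathcal L_{\epsilon,\delta}$. Moreover, the field $\mathbf z\mapsto\frac{\gamma}{M(\mathbf z)}\nabla^\top\mathbf W(\mathbf z)\nabla\mathbf z$ is in general not the gradient of any function.

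You should delete the claims that H1--H2 follow from Lemmas 1--2 and Proposition 1, and simply assume them, as the paper does. Be aware, though, that with either of your explicit choices of $\Psi$ that assumption is essentially vacuous. The paper's sketch leaves the same hole: it never identifies a fixed KL function whose critical points are the fixed points of the frozen-denominator scheme.
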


\begin{remark}
   This convergence statement is consistent with classical IRLS analyses \cite{mohan2012iterative}, where each subproblem is quadratic, while the overall algorithm targets a nonquadratic objective through iterative reweighting. In such settings, one typically proves that the difference between successive iterates vanishes and that every cluster point is stationary. 
\end{remark}

Due to space constraints, the rigorous mathematical proofs for aforementioned theoretical claims are deferred to the supplementary material.
Specifically, we establish a valid majorization bound for the smoothed numerator via the first-order Taylor expansion of concave functions (Lemma \ref{lemma1}), prove monotonic descent of the surrogate objective based on the $L$-Lipschitz continuity of the surrogate gradient (Lemma \ref{lemma2} and Proposition \ref{proposition1}), and analyze the subsequence convergence behavior of the surrogate-based alternating scheme under the Kurdyka--{\L}ojasiewicz (KL) framework (Theorem \ref{theorem1}).

\section{Experiments}
\label{sec:exp}

\begin{figure}[t]
\setlength{\abovecaptionskip}{0.05in}
    \centering
\begin{subfigure}[t]{0.19\textwidth}
\begin{tikzpicture}[zoomboxarray,zoomboxes yshift=1.21,zoomboxes xshift=0.,spymargin=2pt]
\pgfkeys{/tikz/zoomboxarray columns=2,/tikz/zoomboxarray rows=1,/tikz/zoomboxarray inner gap=1}
\node [image node] {
\includegraphics[width=\textwidth]{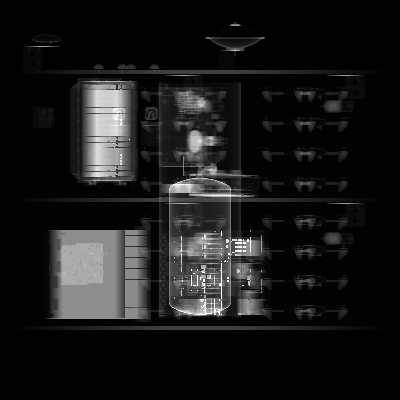}
};
\zoombox[magnification=3,color code=yellow,zoombox paths/.append style={line width=1pt}]{0.49,0.52}
\gdef\rowcount{2}
\zoombox[magnification=3,color code=orange,zoombox paths/.append style={line width=1pt}]{0.33,0.34}
\end{tikzpicture}
\centering \footnotesize (a) GT ($\infty/1$)
\end{subfigure}
\begin{subfigure}[t]{0.19\textwidth}
\begin{tikzpicture}[zoomboxarray,zoomboxes yshift=1.21,zoomboxes xshift=0.,spymargin=2pt]
\pgfkeys{/tikz/zoomboxarray columns=2,/tikz/zoomboxarray rows=1,/tikz/zoomboxarray inner gap=1}
\node [image node] {
\includegraphics[width=\textwidth]{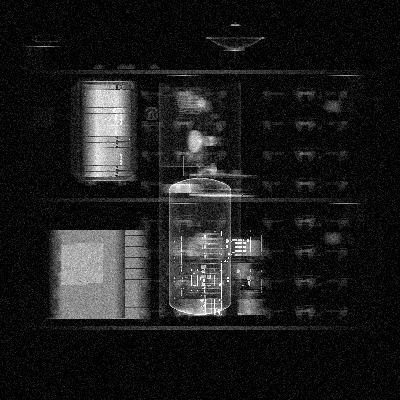}
};
\zoombox[magnification=3,color code=yellow,zoombox paths/.append style={line width=1pt}]{0.49,0.52}
\gdef\rowcount{2}
\zoombox[magnification=3,color code=orange,zoombox paths/.append style={line width=1pt}]{0.33,0.34}
\end{tikzpicture}
\centering \footnotesize (b) Noisy (26.51/0.396)
\end{subfigure}
\begin{subfigure}[t]{0.19\textwidth}
\begin{tikzpicture}[zoomboxarray,zoomboxes yshift=1.21,zoomboxes xshift=0.,spymargin=2pt]
\pgfkeys{/tikz/zoomboxarray columns=2,/tikz/zoomboxarray rows=1,/tikz/zoomboxarray inner gap=1}
\node [image node] {
\includegraphics[width=\textwidth]{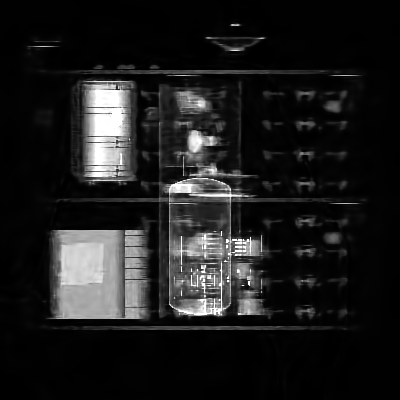}
};
\zoombox[magnification=3,color code=yellow,zoombox paths/.append style={line width=1pt}]{0.49,0.52}
\gdef\rowcount{2}
\zoombox[magnification=3,color code=orange,zoombox paths/.append style={line width=1pt}]{0.33,0.34}
\end{tikzpicture}
\centering \footnotesize (c) ASeqDIP (28.94/0.554)
\end{subfigure}
\begin{subfigure}[t]{0.19\textwidth}
\begin{tikzpicture}[zoomboxarray,zoomboxes yshift=1.21,zoomboxes xshift=0.,spymargin=2pt]
\pgfkeys{/tikz/zoomboxarray columns=2,/tikz/zoomboxarray rows=1,/tikz/zoomboxarray inner gap=1}
\node [image node] {
\includegraphics[width=\textwidth]{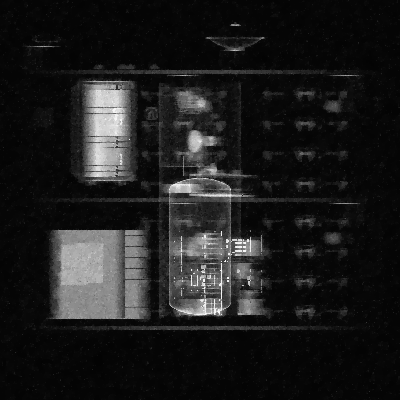}
};
\zoombox[magnification=3,color code=yellow,zoombox paths/.append style={line width=1pt}]{0.49,0.52}
\gdef\rowcount{2}
\zoombox[magnification=3,color code=orange,zoombox paths/.append style={line width=1pt}]{0.33,0.34}
\end{tikzpicture}
\centering \footnotesize (d) $\ell_1/\ell_2$ (30.96/0.554)
\end{subfigure}
\begin{subfigure}[t]{0.19\textwidth}
\begin{tikzpicture}[zoomboxarray,zoomboxes yshift=1.21,zoomboxes xshift=0.,spymargin=2pt]
\pgfkeys{/tikz/zoomboxarray columns=2,/tikz/zoomboxarray rows=1,/tikz/zoomboxarray inner gap=1}
\node [image node] {
\includegraphics[width=\textwidth]{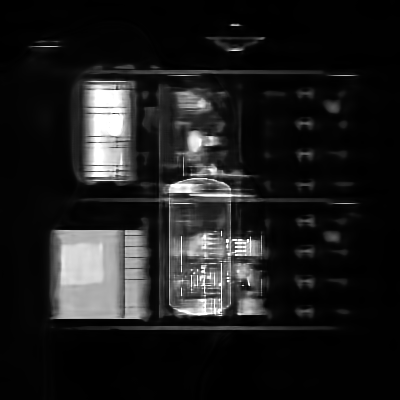}
};
\zoombox[magnification=3,color code=yellow,zoombox paths/.append style={line width=1pt}]{0.49,0.52}
\gdef\rowcount{2}
\zoombox[magnification=3,color code=orange,zoombox paths/.append style={line width=1pt}]{0.33,0.34}
\end{tikzpicture}
\centering \footnotesize (e) FAST-DIP (30.56/0.582)
\end{subfigure}  

\vspace{-1.1in}
\begin{subfigure}[t]{0.19\textwidth}
\begin{tikzpicture}[zoomboxarray,zoomboxes yshift=1.21,zoomboxes xshift=0.,spymargin=2pt]
\pgfkeys{/tikz/zoomboxarray columns=2,/tikz/zoomboxarray rows=1,/tikz/zoomboxarray inner gap=1}
\node [image node] {
\includegraphics[width=\textwidth]{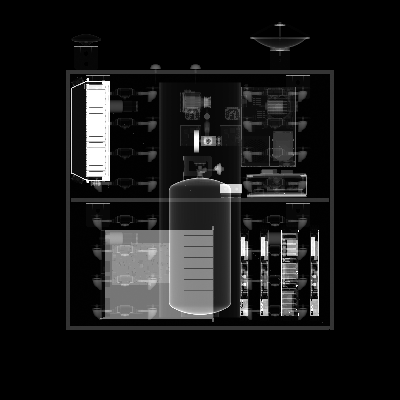}
};
\zoombox[magnification=3.5,color code=yellow,zoombox paths/.append style={line width=1pt}]{0.52,0.36}
\gdef\rowcount{2}
\zoombox[magnification=5,color code=orange,zoombox paths/.append style={line width=1pt}]{0.245,0.56}
\end{tikzpicture}
\centering \footnotesize (a) GT ($\infty$/1)
\end{subfigure} 
\begin{subfigure}[t]{0.19\textwidth}
\begin{tikzpicture}[zoomboxarray,zoomboxes yshift=1.21,zoomboxes xshift=0.,spymargin=2pt]
\pgfkeys{/tikz/zoomboxarray columns=2,/tikz/zoomboxarray rows=1,/tikz/zoomboxarray inner gap=1}
\node [image node] {
\includegraphics[width=\textwidth]{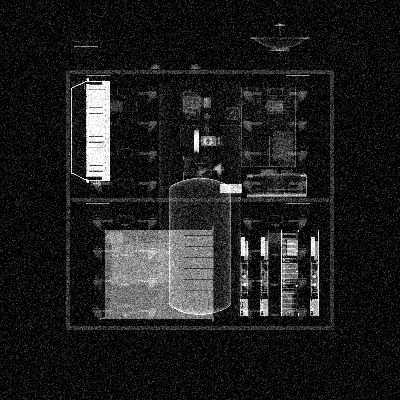}
};
\zoombox[magnification=3.5,color code=yellow,zoombox paths/.append style={line width=1pt}]{0.52,0.36}
\gdef\rowcount{2}
\zoombox[magnification=5,color code=orange,zoombox paths/.append style={line width=1pt}]{0.245,0.56}
\end{tikzpicture}
\centering \footnotesize (b) Noisy (22.36/0.261) 
\end{subfigure}
\begin{subfigure}[t]{0.19\textwidth}
\begin{tikzpicture}[zoomboxarray,zoomboxes yshift=1.21,zoomboxes xshift=0.,spymargin=2pt]
\pgfkeys{/tikz/zoomboxarray columns=2,/tikz/zoomboxarray rows=1,/tikz/zoomboxarray inner gap=1}
\node [image node] {
\includegraphics[width=\textwidth]{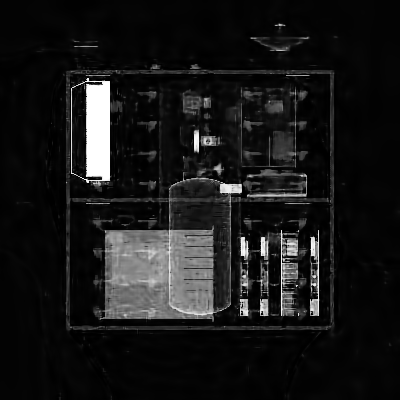}
};
\zoombox[magnification=3.5,color code=yellow,zoombox paths/.append style={line width=1pt}]{0.52,0.36}
\gdef\rowcount{2}
\zoombox[magnification=5,color code=orange,zoombox paths/.append style={line width=1pt}]{0.245,0.56}
\end{tikzpicture}
\centering \footnotesize (c) ASeqDIP (26.44/0.399)
\end{subfigure}
\begin{subfigure}[t]{0.19\textwidth}
\begin{tikzpicture}[zoomboxarray,zoomboxes yshift=1.21,zoomboxes xshift=0.,spymargin=2pt]
\pgfkeys{/tikz/zoomboxarray columns=2,/tikz/zoomboxarray rows=1,/tikz/zoomboxarray inner gap=1}
\node [image node] {
\includegraphics[width=\textwidth]{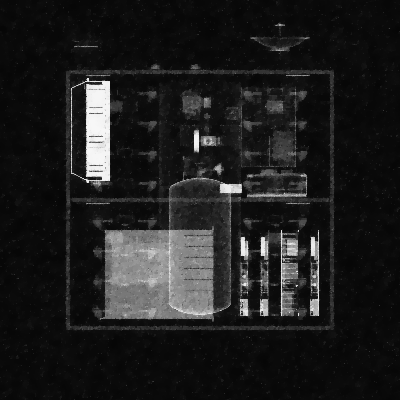}
};
\zoombox[magnification=3.5,color code=yellow,zoombox paths/.append style={line width=1pt}]{0.52,0.36}
\gdef\rowcount{2}
\zoombox[magnification=5,color code=orange,zoombox paths/.append style={line width=1pt}]{0.245,0.56}
\end{tikzpicture}
\centering \footnotesize (d) $\ell_1/\ell_2$ (26.87/0.383)
\end{subfigure}
\begin{subfigure}[t]{0.19\textwidth}
\begin{tikzpicture}[zoomboxarray,zoomboxes yshift=1.21,zoomboxes xshift=0.,spymargin=2pt]
\pgfkeys{/tikz/zoomboxarray columns=2,/tikz/zoomboxarray rows=1,/tikz/zoomboxarray inner gap=1}
\node [image node] {
\includegraphics[width=\textwidth]{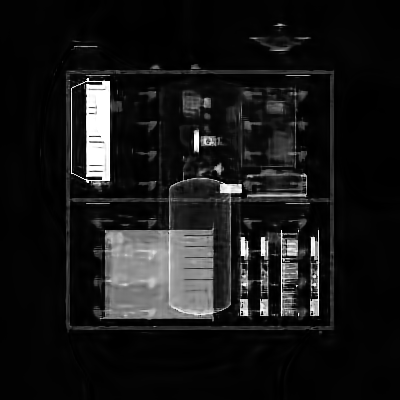}
};
\zoombox[magnification=3.5,color code=yellow,zoombox paths/.append style={line width=1pt}]{0.52,0.36}
\gdef\rowcount{2}
\zoombox[magnification=5,color code=orange,zoombox paths/.append style={line width=1pt}]{0.245,0.56}
\end{tikzpicture}
\centering \footnotesize (e) FAST-DIP (27.06/0.411)
\end{subfigure}
\caption{Visual comparisons of CT data denoising with Gaussian noise. The first row is with noise level $\sigma=15$, and the second row is with noise level $\sigma=25$. From left to right: ground truth (GT), noisy input, ASeqDIP \cite{alkhouri2024image}, traditional $\ell_1/\ell_2$ method \cite{wang2022minimizing}, and the proposed FAST-DIP.}
    \label{fig: denoising25}
\end{figure}

\subsection{Implementation}
To rigorously assess the effectiveness of the proposed FAST-DIP reconstruction model, we perform a series of experiments on image denoising and three-dimensional (3D) reconstruction using a synthetic dataset derived from XENA \cite{xenadarpawebsite}. 
{The dataset is a simulated CT scenario, in which a drone-mounted sensor rotates around a box-shaped target with objects placed inside at a standoff distance of 1000 m. Projection data are collected under a parallel-beam geometry spanning a full 360° angular range, with each projection individually stored as a TIFF file indexed by its angle. It is restricted and cannot be made publicly available due to confidentiality constraints.}
All experiments are implemented in PyTorch and conducted on an NVIDIA RTX PRO 6000 Blackwell GPU.
For fair comparison, we adopt the same network architecture as ASeqDIP, utilizing an encoder-decoder skip network with reflection padding and LeakyReLU activations. 

In our method, the network takes intermediate variable $\mathbf{z}$ (initialized as a random variable) as input. The parameters $\phi$ are optimized using the Adam optimizer with learning rate $1e-4$, while the projection $\mathbf{z}$ is updated via gradient descent. The alternating gradient descent process runs for $K = 800$ outer iterations, with the network updated for $N = 2$ inner steps per iteration. The autoencoding loss weight is set to $\lambda = 1.0$, the fractional gradient penalty weight is set to $\gamma = 0.01$, and the step size is $\beta = 0.9$. 

\begin{table}[t]
\centering
\caption{
Average quantitative comparison of image denoising on 360 simulated 2D images from the XENA dataset~\cite{xenadarpawebsite} under Gaussian noise.
The best result is shown in bold, and the second-best result is underlined.
}
\label{tab:2Ddenoising_results}
\small
\setlength{\tabcolsep}{7pt}
\renewcommand{\arraystretch}{1.15}
\begin{tabular}{lcccc}
\toprule
\multirow{2}{*}{\diagbox[width=8.5em,height=2.4\line]{Method}{Noise}}
& \multicolumn{2}{c}{$\sigma=15$}
& \multicolumn{2}{c}{$\sigma=25$} \\
\cmidrule(lr){2-3}
\cmidrule(lr){4-5}
& PSNR $\uparrow$ & SSIM $\uparrow$
& PSNR $\uparrow$ & SSIM $\uparrow$ \\
\midrule
Noisy
& 26.57 & 0.389
& 22.30 & 0.255 \\

ASeqDIP
& 30.22 & \underline{0.560}
& 26.83 & \underline{0.431} \\

$\ell_1/\ell_2$
& \textbf{30.90} & 0.530
& \textbf{27.25} & 0.425 \\

\textbf{FAST-DIP}
& \underline{30.66} & \textbf{0.566}
& \underline{27.20} & \textbf{0.456} \\
\bottomrule
\end{tabular}
\end{table}

\subsection{2D Denoising}
To evaluate the proposed FAST-DIP, we first implement a denoising task. We set the forward operator $\mathbf{A}=\mathbf{I}$ in our model. We compare ASeqDIP \cite{alkhouri2024image}, traditional $\ell_1/\ell_2$ \cite{wang2022minimizing}, and our method for images from XENA \cite{xenadarpawebsite} dataset corrupted with Gaussian noise ($\sigma=15$ and $\sigma=25$). 
The average quantitative results are presented in Table \ref{tab:2Ddenoising_results}. Regarding the ASeqDIP method, we followed the original paper's specifications and executed 2,000 iterations to obtain the optimal result. However, FAST-DIP required only 800 iterations to achieve the optimal outcome. Moreover, our method yields superior denoising performance compared to ASeqDIP. As indicated in Table \ref{tab:2Ddenoising_results}, our results are approximately 0.4 to 0.5 dB higher than those of ASeqDIP on average. Furthermore, we conducted a comparison with the traditional $\ell_1/\ell_2$ method \cite{wang2022minimizing}, which further highlights the advantages of our approach.

\begin{table*}[t]
\centering
\caption{
Average quantitative comparison of 3D reconstruction on simulated 20 given views and 100 novel views from the XENA dataset~\cite{xenadarpawebsite}.
The best result is shown in bold, and the second-best result is underlined.
}
\label{tab:3Dreconstruction-results}
\small
\setlength{\tabcolsep}{4.8pt}
\renewcommand{\arraystretch}{1.15}
\begin{tabular}{lcccccccc}
\toprule
\multirow{3}{*}{\diagbox[width=8.5em,height=3.2\line]{Method}{Setting}}
& \multicolumn{4}{c}{Sparse Views}
& \multicolumn{4}{c}{Limited Angles} \\
\cmidrule(lr){2-5}
\cmidrule(lr){6-9}
& \multicolumn{2}{c}{Given Views}
& \multicolumn{2}{c}{Novel Views}
& \multicolumn{2}{c}{Given Views}
& \multicolumn{2}{c}{Novel Views} \\
\cmidrule(lr){2-3}
\cmidrule(lr){4-5}
\cmidrule(lr){6-7}
\cmidrule(lr){8-9}
& PSNR $\uparrow$ & MS-SSIM $\uparrow$
& PSNR $\uparrow$ & MS-SSIM $\uparrow$
& PSNR $\uparrow$ & MS-SSIM $\uparrow$
& PSNR $\uparrow$ & MS-SSIM $\uparrow$ \\
\midrule
FBP
& 30.44 & 0.956
& 28.78 & 0.935
& 22.47 & 0.776
& 20.50 & 0.655 \\

ASeqDIP
& 40.61 & \underline{0.991}
& 31.91 & 0.970
& 28.21 & 0.921
& 22.27 & 0.731 \\

ASeqDIP + TV
& \underline{42.81} & \textbf{0.996}
& \underline{32.49} & \underline{0.973}
& \underline{42.73} & \underline{0.996}
& \underline{27.30} & \textbf{0.865} \\

\textbf{FAST-DIP}
& \textbf{43.33} & \textbf{0.996}
& \textbf{32.92} & \textbf{0.976}
& \textbf{45.11} & \textbf{0.999}
& \textbf{27.46} & \textbf{0.865} \\
\bottomrule
\end{tabular}
\end{table*}

Additionally, we present the visual results of Gaussian noise level $\sigma=15$ and $\sigma=25$ in Fig. \ref{fig: denoising25}. 
ASeqDIP suppresses noise but tends to oversmooth certain regions, reducing structural fidelity. The $\ell_1/\ell_2$ method reduces some noise but still leaves visible artifacts and loses fine details. 
In contrast, our method produces cleaner images while preserving sharper edges and structural details. As highlighted in the zoomed regions, our approach better recovers thin boundaries and texture patterns that remain noisy in the other methods. 
Overall, the visual results show that our method achieves a better balance between noise removal and detail preservation, which is consistent with the quantitative improvements reported in Table \ref{tab:2Ddenoising_results}.

\begin{figure*}[t]
\setlength{\abovecaptionskip}{0.05in}
    \centering
\begin{subfigure}[t]{0.19\textwidth}
\begin{tikzpicture}[zoomboxarray,zoomboxes yshift=1.21,zoomboxes xshift=0.,spymargin=2pt]
\pgfkeys{/tikz/zoomboxarray columns=2,/tikz/zoomboxarray rows=1,/tikz/zoomboxarray inner gap=1}
\node [image node] {
\includegraphics[width=\textwidth]{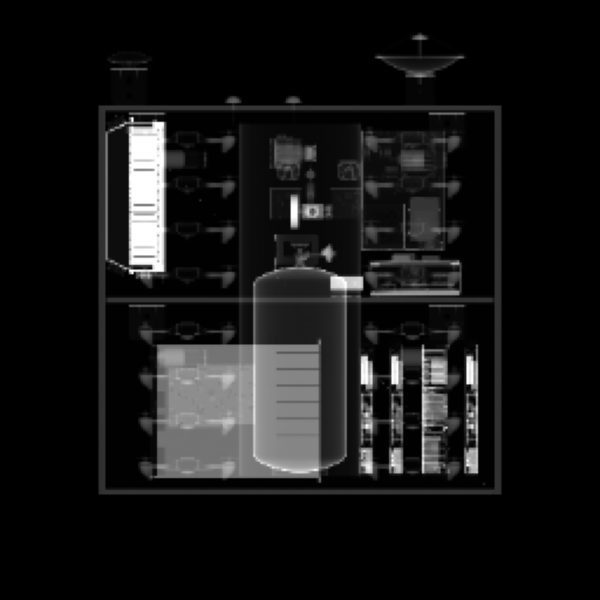}
};
\zoombox[magnification=3.5,color code=yellow,zoombox paths/.append style={line width=1pt}]{0.6,0.36}
\gdef\rowcount{2}
\zoombox[magnification=3,color code=orange,zoombox paths/.append style={line width=1pt}]{0.3,0.28}
\end{tikzpicture}
\centering \scriptsize (a) GT ($\infty$/1)
\end{subfigure} 
\begin{subfigure}[t]{0.19\textwidth}
\begin{tikzpicture}[zoomboxarray,zoomboxes yshift=1.21,zoomboxes xshift=0.,spymargin=2pt]
\pgfkeys{/tikz/zoomboxarray columns=2,/tikz/zoomboxarray rows=1,/tikz/zoomboxarray inner gap=1}
\node [image node] {
\includegraphics[width=\textwidth]{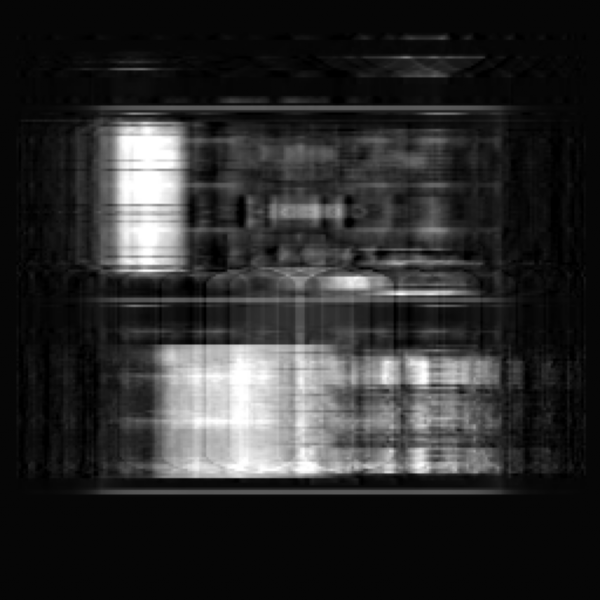}
};
\zoombox[magnification=3.5,color code=yellow,zoombox paths/.append style={line width=1pt}]{0.6,0.36}
\gdef\rowcount{2}
\zoombox[magnification=3,color code=orange,zoombox paths/.append style={line width=1pt}]{0.3,0.28}
\end{tikzpicture}
\centering \scriptsize (b) FBP (22.96/0.789)
\end{subfigure}
\begin{subfigure}[t]{0.19\textwidth}
\begin{tikzpicture}[zoomboxarray,zoomboxes yshift=1.21,zoomboxes xshift=0.,spymargin=2pt]
\pgfkeys{/tikz/zoomboxarray columns=2,/tikz/zoomboxarray rows=1,/tikz/zoomboxarray inner gap=1}
\node [image node] {
\includegraphics[width=\textwidth]{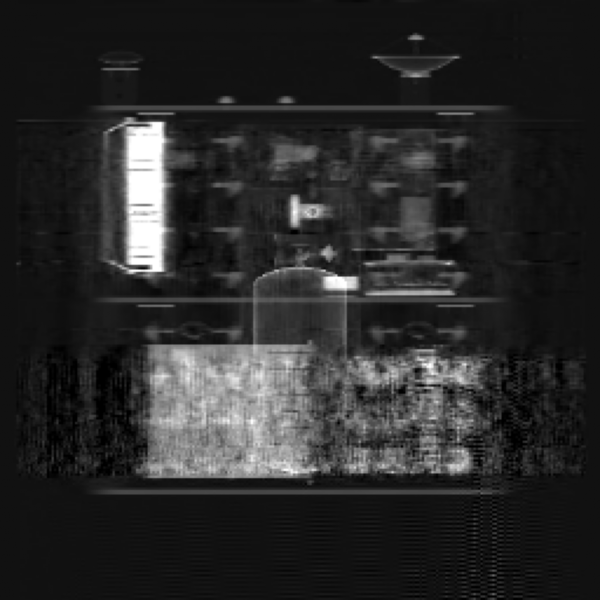}
};
\zoombox[magnification=3.5,color code=yellow,zoombox paths/.append style={line width=1pt}]{0.6,0.36}
\gdef\rowcount{2}
\zoombox[magnification=3,color code=orange,zoombox paths/.append style={line width=1pt}]{0.3,0.28}
\end{tikzpicture}
\centering \scriptsize (c) ASeqDIP (29.08/0.932)
\end{subfigure}
\begin{subfigure}[t]{0.19\textwidth}
\begin{tikzpicture}[zoomboxarray,zoomboxes yshift=1.21,zoomboxes xshift=0.,spymargin=2pt]
\pgfkeys{/tikz/zoomboxarray columns=2,/tikz/zoomboxarray rows=1,/tikz/zoomboxarray inner gap=1}
\node [image node] {
\includegraphics[width=\textwidth]{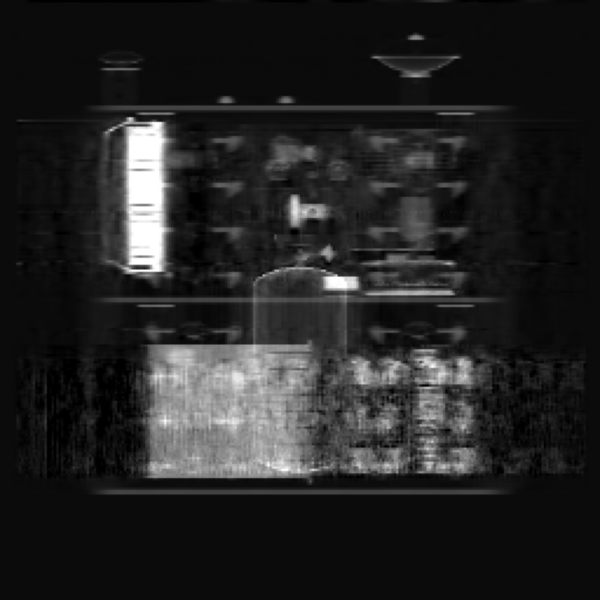}
};
\zoombox[magnification=3.5,color code=yellow,zoombox paths/.append style={line width=1pt}]{0.6,0.36}
\gdef\rowcount{2}
\zoombox[magnification=3,color code=orange,zoombox paths/.append style={line width=1pt}]{0.3,0.28}
\end{tikzpicture}
\centering \scriptsize (d) ASeqDIP + TV (30.17/0.943)
\end{subfigure}
\begin{subfigure}[t]{0.19\textwidth}
\begin{tikzpicture}[zoomboxarray,zoomboxes yshift=1.21,zoomboxes xshift=0.,spymargin=2pt]
\pgfkeys{/tikz/zoomboxarray columns=2,/tikz/zoomboxarray rows=1,/tikz/zoomboxarray inner gap=1}
\node [image node] {
\includegraphics[width=\textwidth]{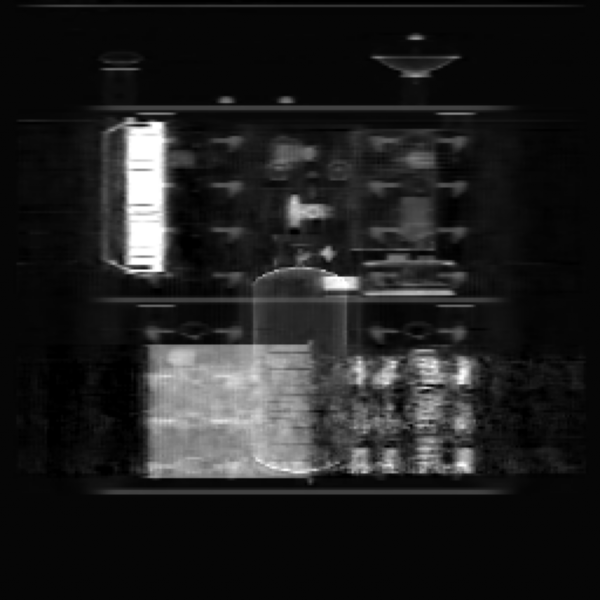}
};
\zoombox[magnification=3.5,color code=yellow,zoombox paths/.append style={line width=1pt}]{0.6,0.36}
\gdef\rowcount{2}
\zoombox[magnification=3,color code=orange,zoombox paths/.append style={line width=1pt}]{0.3,0.28}
\end{tikzpicture}
\centering \scriptsize (e) FAST-DIP (31.30/0.959)
\end{subfigure}

\vspace{-1.1in}
\begin{subfigure}[t]{0.19\textwidth}
\begin{tikzpicture}[zoomboxarray,zoomboxes yshift=1.21,zoomboxes xshift=0.,spymargin=2pt]
\pgfkeys{/tikz/zoomboxarray columns=2,/tikz/zoomboxarray rows=1,/tikz/zoomboxarray inner gap=1}
\node [image node] {
\includegraphics[width=\textwidth]{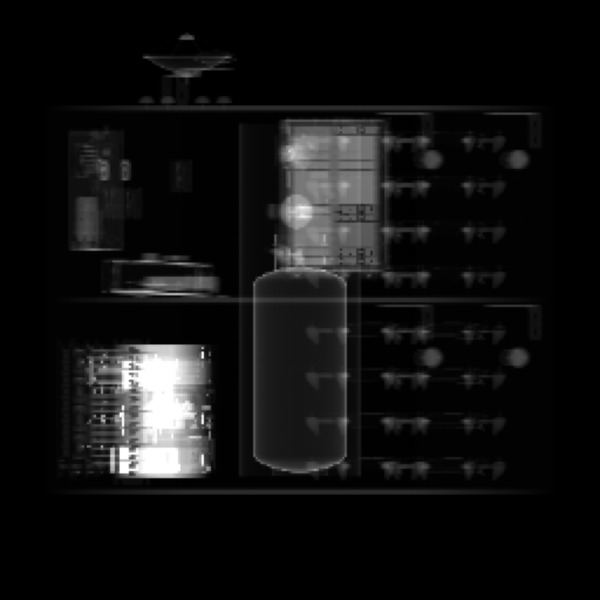}
};
\zoombox[magnification=3.5,color code=yellow,zoombox paths/.append style={line width=1pt}]{0.5,0.3}
\gdef\rowcount{2}
\zoombox[magnification=4.5,color code=orange,zoombox paths/.append style={line width=1pt}]{0.2,0.71}
\end{tikzpicture}
\centering \scriptsize (a) GT ($\infty$/1)
\end{subfigure} 
\begin{subfigure}[t]{0.19\textwidth}
\begin{tikzpicture}[zoomboxarray,zoomboxes yshift=1.21,zoomboxes xshift=0.,spymargin=2pt]
\pgfkeys{/tikz/zoomboxarray columns=2,/tikz/zoomboxarray rows=1,/tikz/zoomboxarray inner gap=1}
\node [image node] {
\includegraphics[width=\textwidth]{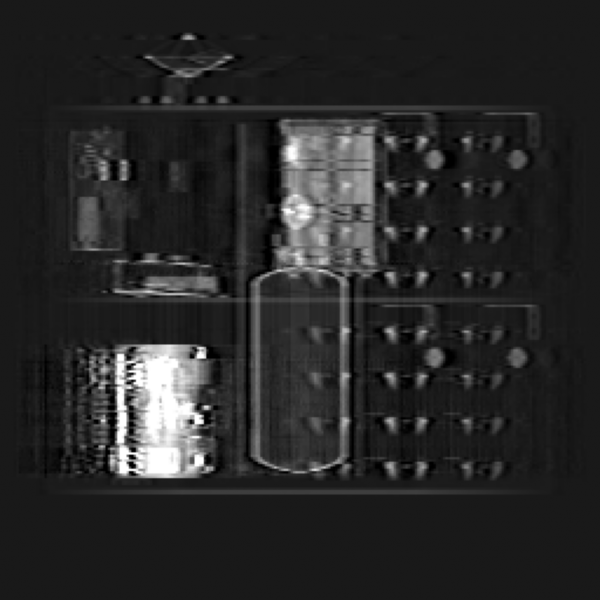}
};
\zoombox[magnification=3.5,color code=yellow,zoombox paths/.append style={line width=1pt}]{0.5,0.3}
\gdef\rowcount{2}
\zoombox[magnification=4.5,color code=orange,zoombox paths/.append style={line width=1pt}]{0.2,0.71}
\end{tikzpicture}
\centering \scriptsize (b) FBP (29.85/0.954)
\end{subfigure}
\begin{subfigure}[t]{0.19\textwidth}
\begin{tikzpicture}[zoomboxarray,zoomboxes yshift=1.21,zoomboxes xshift=0.,spymargin=2pt]
\pgfkeys{/tikz/zoomboxarray columns=2,/tikz/zoomboxarray rows=1,/tikz/zoomboxarray inner gap=1}
\node [image node] {
\includegraphics[width=\textwidth]{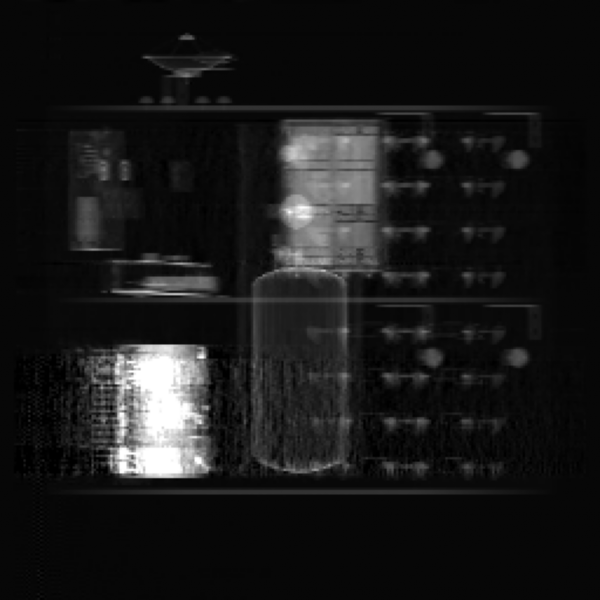}
};
\zoombox[magnification=3.5,color code=yellow,zoombox paths/.append style={line width=1pt}]{0.5,0.3}
\gdef\rowcount{2}
\zoombox[magnification=4.5,color code=orange,zoombox paths/.append style={line width=1pt}]{0.2,0.71}
\end{tikzpicture}
\centering \scriptsize (c) ASeqDIP (34.19/0.987)
\end{subfigure}
\begin{subfigure}[t]{0.19\textwidth}
\begin{tikzpicture}[zoomboxarray,zoomboxes yshift=1.21,zoomboxes xshift=0.,spymargin=2pt]
\pgfkeys{/tikz/zoomboxarray columns=2,/tikz/zoomboxarray rows=1,/tikz/zoomboxarray inner gap=1}
\node [image node] {
\includegraphics[width=\textwidth]{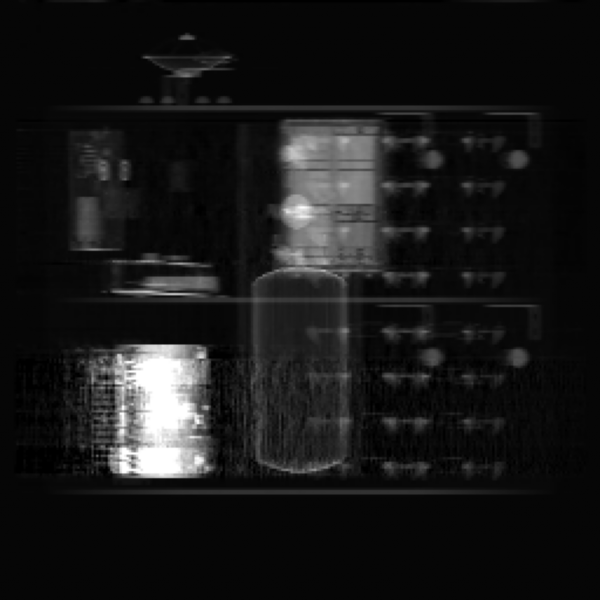}
};
\zoombox[magnification=3.5,color code=yellow,zoombox paths/.append style={line width=1pt}]{0.5,0.3}
\gdef\rowcount{2}
\zoombox[magnification=4.5,color code=orange,zoombox paths/.append style={line width=1pt}]{0.2,0.71}
\end{tikzpicture}
\centering \scriptsize (d) ASeqDIP + TV (34.52/0.988)
\end{subfigure}
\begin{subfigure}[t]{0.19\textwidth}
\begin{tikzpicture}[zoomboxarray,zoomboxes yshift=1.21,zoomboxes xshift=0.,spymargin=2pt]
\pgfkeys{/tikz/zoomboxarray columns=2,/tikz/zoomboxarray rows=1,/tikz/zoomboxarray inner gap=1}
\node [image node] {
\includegraphics[width=\textwidth]{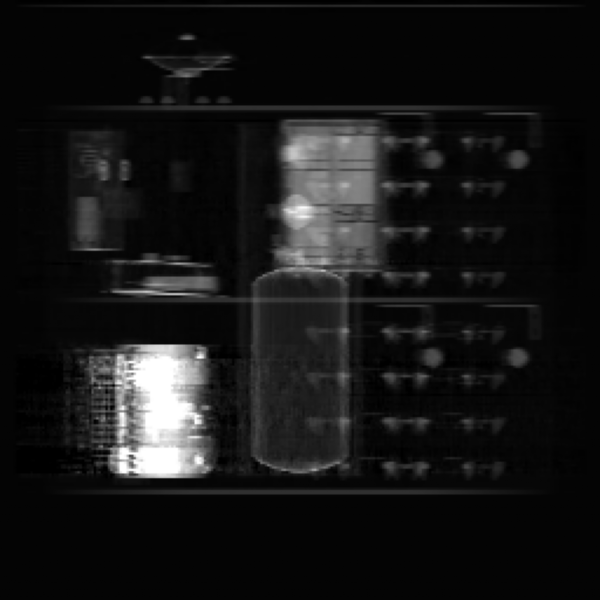}
};
\zoombox[magnification=3.5,color code=yellow,zoombox paths/.append style={line width=1pt}]{0.5,0.3}
\gdef\rowcount{2}
\zoombox[magnification=4.5,color code=orange,zoombox paths/.append style={line width=1pt}]{0.2,0.71}
\end{tikzpicture}
\centering \scriptsize (e) FAST-DIP (34.70/0.988)
\end{subfigure}
\caption{Visual comparison of novel views obtained from sparse-view 3D reconstructions of simulated XENA data \cite{xenadarpawebsite}. Zoomed regions highlight the differences in structural recovery and artifact suppression.}
    \label{fig: 3D reconstruction}
\end{figure*}
\begin{figure*}[t]
\begin{subfigure}[t]{0.19\textwidth}
\begin{tikzpicture}[zoomboxarray,zoomboxes yshift=1.21,zoomboxes xshift=0.,spymargin=2pt]
\pgfkeys{/tikz/zoomboxarray columns=2,/tikz/zoomboxarray rows=1,/tikz/zoomboxarray inner gap=1}
\node [image node] {
\includegraphics[width=\textwidth]{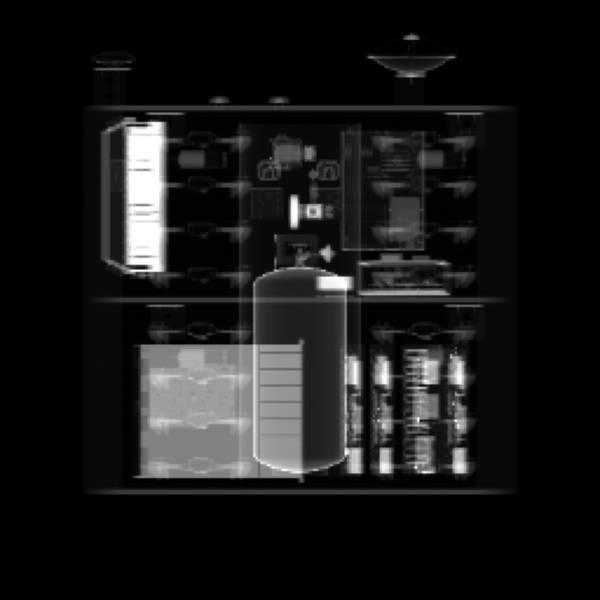}
};
\zoombox[magnification=4,color code=yellow,zoombox paths/.append style={line width=1pt}]{0.7,0.6}
\gdef\rowcount{2}
\zoombox[magnification=5.5,color code=orange,zoombox paths/.append style={line width=1pt}]{0.335,0.23}
\end{tikzpicture}
\centering \scriptsize (a) GT ($\infty$/1)
\end{subfigure} 
\begin{subfigure}[t]{0.19\textwidth}
\begin{tikzpicture}[zoomboxarray,zoomboxes yshift=1.21,zoomboxes xshift=0.,spymargin=2pt]
\pgfkeys{/tikz/zoomboxarray columns=2,/tikz/zoomboxarray rows=1,/tikz/zoomboxarray inner gap=1}
\node [image node] {
\includegraphics[width=\textwidth]{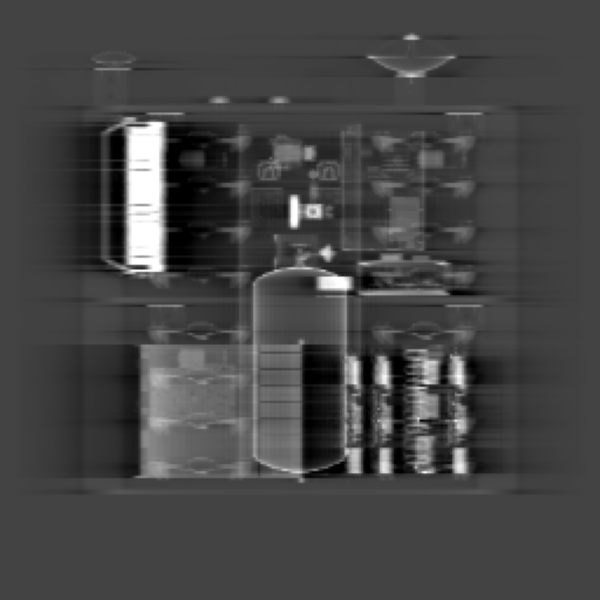}
};
\zoombox[magnification=4,color code=yellow,zoombox paths/.append style={line width=1pt}]{0.7,0.6}
\gdef\rowcount{2}
\zoombox[magnification=5.5,color code=orange,zoombox paths/.append style={line width=1pt}]{0.335,0.23}
\end{tikzpicture}
\centering \scriptsize (b) FBP (23.42/0.866)
\end{subfigure}
\begin{subfigure}[t]{0.19\textwidth}
\begin{tikzpicture}[zoomboxarray,zoomboxes yshift=1.21,zoomboxes xshift=0.,spymargin=2pt]
\pgfkeys{/tikz/zoomboxarray columns=2,/tikz/zoomboxarray rows=1,/tikz/zoomboxarray inner gap=1}
\node [image node] {
\includegraphics[width=\textwidth]{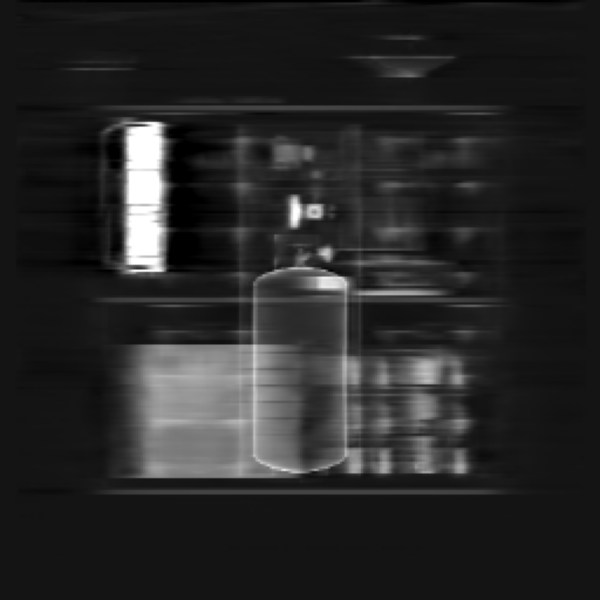}
};
\zoombox[magnification=4,color code=yellow,zoombox paths/.append style={line width=1pt}]{0.7,0.6}
\gdef\rowcount{2}
\zoombox[magnification=5.5,color code=orange,zoombox paths/.append style={line width=1pt}]{0.335,0.23}
\end{tikzpicture}
\centering\scriptsize (c) ASeqDIP (25.74/0.887)
\end{subfigure}
\begin{subfigure}[t]{0.19\textwidth}
\begin{tikzpicture}[zoomboxarray,zoomboxes yshift=1.21,zoomboxes xshift=0.,spymargin=2pt]
\pgfkeys{/tikz/zoomboxarray columns=2,/tikz/zoomboxarray rows=1,/tikz/zoomboxarray inner gap=1}
\node [image node] {
\includegraphics[width=\textwidth]{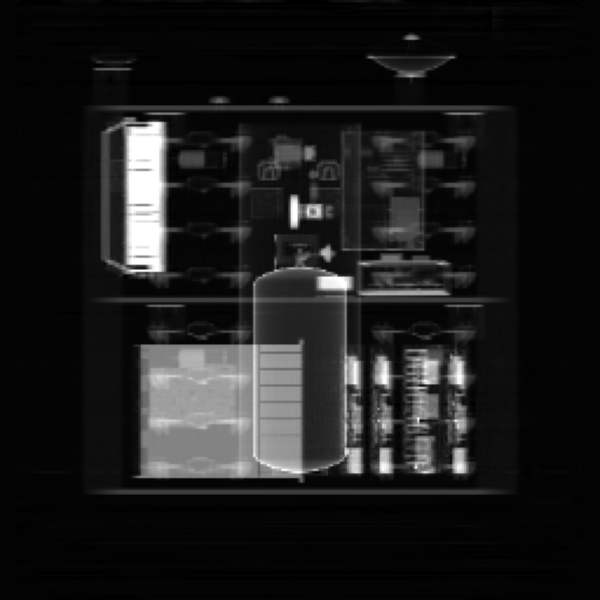}
};
\zoombox[magnification=4,color code=yellow,zoombox paths/.append style={line width=1pt}]{0.7,0.6}
\gdef\rowcount{2}
\zoombox[magnification=5.5,color code=orange,zoombox paths/.append style={line width=1pt}]{0.335,0.23}
\end{tikzpicture}
\centering \scriptsize (d) ASeqDIP + TV (42.84/0.996)
\end{subfigure}
\begin{subfigure}[t]{0.19\textwidth}
\begin{tikzpicture}[zoomboxarray,zoomboxes yshift=1.21,zoomboxes xshift=0.,spymargin=2pt]
\pgfkeys{/tikz/zoomboxarray columns=2,/tikz/zoomboxarray rows=1,/tikz/zoomboxarray inner gap=1}
\node [image node] {
\includegraphics[width=\textwidth]{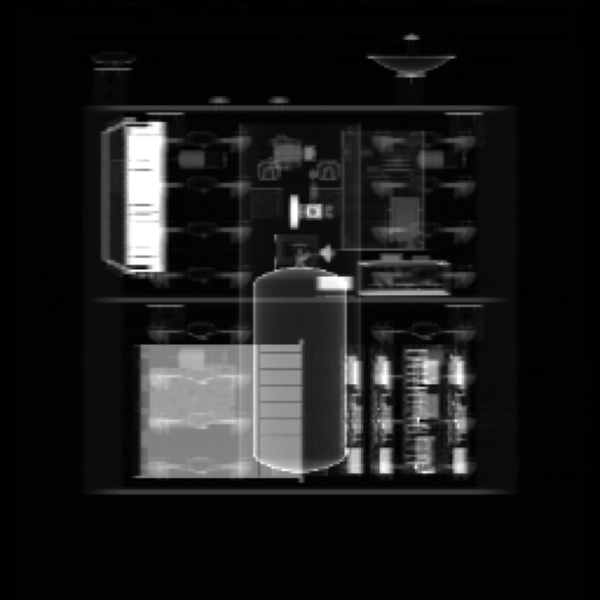}
};
\zoombox[magnification=4,color code=yellow,zoombox paths/.append style={line width=1pt}]{0.7,0.6}
\gdef\rowcount{2}
\zoombox[magnification=5.5,color code=orange,zoombox paths/.append style={line width=1pt}]{0.335,0.23}
\end{tikzpicture}
\centering \scriptsize (e) FAST-DIP (45.65/0.999)
\end{subfigure}

\vspace{-1.1in}
\begin{subfigure}[t]{0.19\textwidth}
\begin{tikzpicture}[zoomboxarray,zoomboxes yshift=1.21,zoomboxes xshift=0.,spymargin=2pt]
\pgfkeys{/tikz/zoomboxarray columns=2,/tikz/zoomboxarray rows=1,/tikz/zoomboxarray inner gap=1}
\node [image node] {
\includegraphics[width=\textwidth]{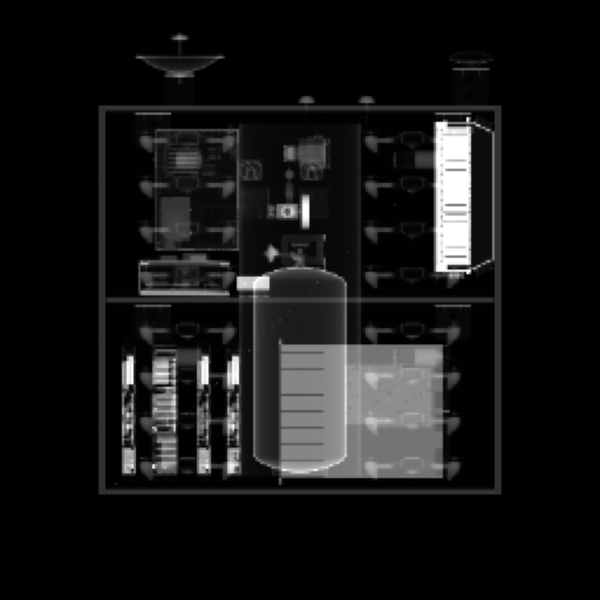}
};
\zoombox[magnification=5.5,color code=yellow,zoombox paths/.append style={line width=1pt}]{0.69,0.23}
\gdef\rowcount{2}
\zoombox[magnification=6,color code=orange,zoombox paths/.append style={line width=1pt}]{0.328,0.23}
\end{tikzpicture}
\centering \scriptsize (a) GT ($\infty$/1)
\end{subfigure} 
\begin{subfigure}[t]{0.19\textwidth}
\begin{tikzpicture}[zoomboxarray,zoomboxes yshift=1.21,zoomboxes xshift=0.,spymargin=2pt]
\pgfkeys{/tikz/zoomboxarray columns=2,/tikz/zoomboxarray rows=1,/tikz/zoomboxarray inner gap=1}
\node [image node] {
\includegraphics[width=\textwidth]{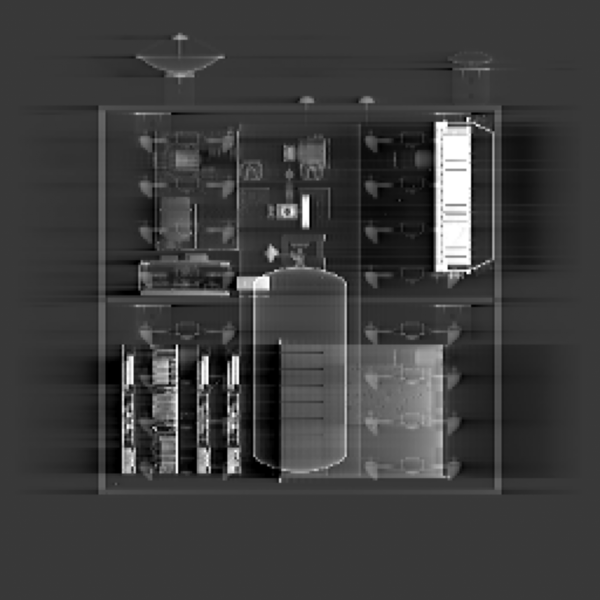}
};
\zoombox[magnification=5.5,color code=yellow,zoombox paths/.append style={line width=1pt}]{0.69,0.23}
\gdef\rowcount{2}
\zoombox[magnification=6,color code=orange,zoombox paths/.append style={line width=1pt}]{0.328,0.23}
\end{tikzpicture}
\centering \scriptsize (b) FBP (21.55/0.813)
\end{subfigure}
\begin{subfigure}[t]{0.19\textwidth}
\begin{tikzpicture}[zoomboxarray,zoomboxes yshift=1.21,zoomboxes xshift=0.,spymargin=2pt]
\pgfkeys{/tikz/zoomboxarray columns=2,/tikz/zoomboxarray rows=1,/tikz/zoomboxarray inner gap=1}
\node [image node] {
\includegraphics[width=\textwidth]{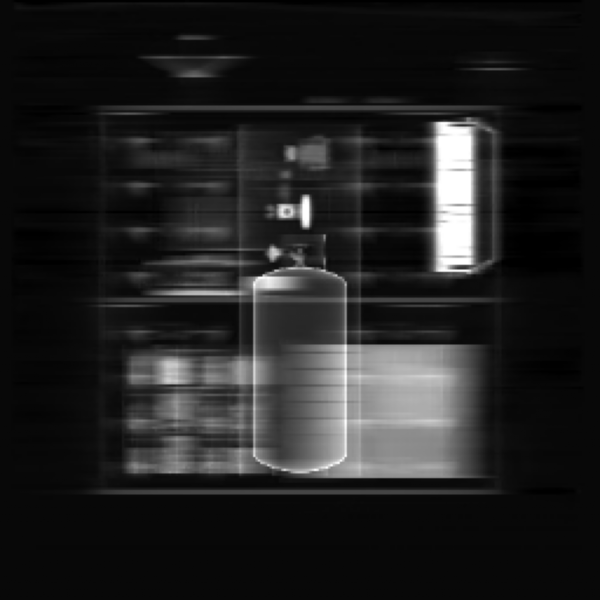}
};
\zoombox[magnification=5.5,color code=yellow,zoombox paths/.append style={line width=1pt}]{0.69,0.23}
\gdef\rowcount{2}
\zoombox[magnification=6,color code=orange,zoombox paths/.append style={line width=1pt}]{0.328,0.23}
\end{tikzpicture}
\centering \scriptsize (c) ASeqDIP (23.24/0.838)
\end{subfigure}
\begin{subfigure}[t]{0.19\textwidth}
\begin{tikzpicture}[zoomboxarray,zoomboxes yshift=1.21,zoomboxes xshift=0.,spymargin=2pt]
\pgfkeys{/tikz/zoomboxarray columns=2,/tikz/zoomboxarray rows=1,/tikz/zoomboxarray inner gap=1}
\node [image node] {
\includegraphics[width=\textwidth]{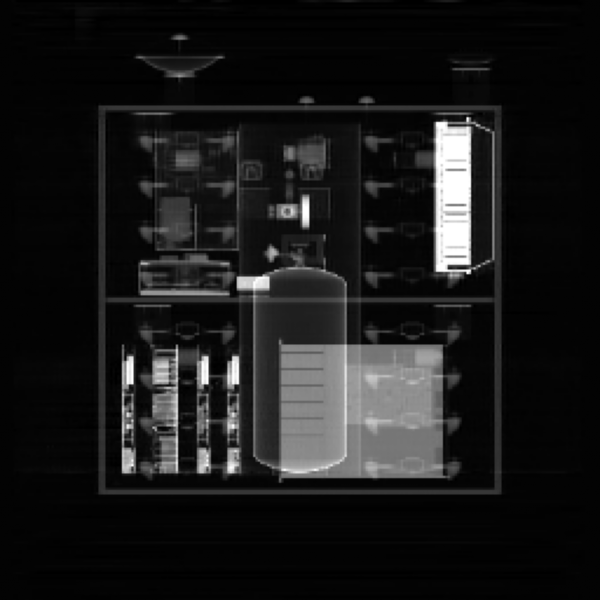}
};
\zoombox[magnification=5.5,color code=yellow,zoombox paths/.append style={line width=1pt}]{0.69,0.23}
\gdef\rowcount{2}
\zoombox[magnification=6,color code=orange,zoombox paths/.append style={line width=1pt}]{0.328,0.23}
\end{tikzpicture}
\centering \scriptsize (d) ASeqDIP + TV (34.18/0.990)
\end{subfigure}
\begin{subfigure}[t]{0.19\textwidth}
\begin{tikzpicture}[zoomboxarray,zoomboxes yshift=1.21,zoomboxes xshift=0.,spymargin=2pt]
\pgfkeys{/tikz/zoomboxarray columns=2,/tikz/zoomboxarray rows=1,/tikz/zoomboxarray inner gap=1}
\node [image node] {
\includegraphics[width=\textwidth]{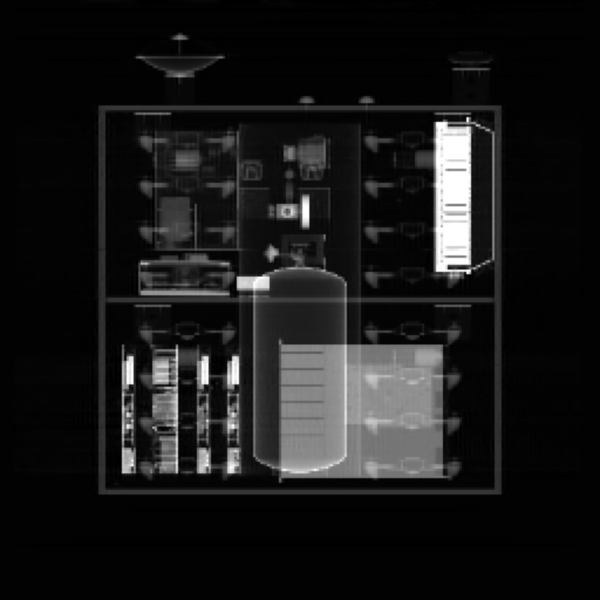}
};
\zoombox[magnification=5.5,color code=yellow,zoombox paths/.append style={line width=1pt}]{0.69,0.23}
\gdef\rowcount{2}
\zoombox[magnification=6,color code=orange,zoombox paths/.append style={line width=1pt}]{0.328,0.23}
\end{tikzpicture}
\centering \scriptsize (e) FAST-DIP (34.26/0.992)
\end{subfigure}
\caption{Visual comparison of novel views from limited-angle 3D reconstructions of simulated XENA data \cite{xenadarpawebsite}. Due to the missing angular information, this setting is more severely ill-posed than sparse-view reconstruction. The zoomed regions highlight the ability of different methods to recover structural details under strong artifacts.}
    \label{fig: 3D limit reconstruction}
\end{figure*}

\begin{table*}[t]
\centering
\caption{
Ablation study of the TV regularization parameter $\lambda$ in ASeqDIP+TV for given-view and novel-view reconstruction under sparse-view and limited-angle settings.
The best result is shown in bold, and the second-best result is underlined.
}
\label{tab:Aseq-DIP + TV ablation combined}
\small
\setlength{\tabcolsep}{4.8pt}
\renewcommand{\arraystretch}{1.15}
\begin{tabular}{lcccccccc}
\toprule
\multirow{3}{*}{\diagbox[width=7.0em,height=3.2\line]{$\lambda$}{Setting}}
& \multicolumn{4}{c}{Sparse Views}
& \multicolumn{4}{c}{Limited Angles} \\
\cmidrule(lr){2-5}
\cmidrule(lr){6-9}
& \multicolumn{2}{c}{Given Views}
& \multicolumn{2}{c}{Novel Views}
& \multicolumn{2}{c}{Given Views}
& \multicolumn{2}{c}{Novel Views} \\
\cmidrule(lr){2-3}
\cmidrule(lr){4-5}
\cmidrule(lr){6-7}
\cmidrule(lr){8-9}
& PSNR $\uparrow$ & MS-SSIM $\uparrow$
& PSNR $\uparrow$ & MS-SSIM $\uparrow$
& PSNR $\uparrow$ & MS-SSIM $\uparrow$
& PSNR $\uparrow$ & MS-SSIM $\uparrow$ \\
\midrule
$10^{-1}$
& 30.35 & 0.959
& 27.43 & 0.933
& 27.03 & 0.865
& 16.37 & 0.410 \\

$10^{-2}$
& 41.50 & \underline{0.995}
& 32.14 & \underline{0.970}
& 36.17 & 0.979
& 23.48 & 0.802 \\

$\mathbf{10^{-3}}$
& \textbf{42.81} & \textbf{0.996}
& \textbf{32.49} & \textbf{0.973}
& \textbf{42.73} & \textbf{0.996}
& \textbf{27.30} & \textbf{0.865} \\

$8\times10^{-4}$
& \underline{42.47} & \textbf{0.996}
& \underline{32.21} & \underline{0.970}
& \underline{38.79} & \underline{0.992}
& \underline{24.35} & \underline{0.822} \\
\bottomrule
\end{tabular}
\end{table*}

\begin{table*}[t]
\centering
\caption{
Ablation study of the parameter $\gamma$ in the proposed FAST-DIP for given-view and novel-view reconstruction under sparse-view and limited-angle settings.
The best result is shown in bold, and the second-best result is underlined.
}
\label{tab:Ours ablation combined}
\small
\setlength{\tabcolsep}{4.8pt}
\renewcommand{\arraystretch}{1.15}
\begin{tabular}{lcccccccc}
\toprule
\multirow{3}{*}{\diagbox[width=7.0em,height=3.2\line]{$\gamma$}{Setting}}
& \multicolumn{4}{c}{Sparse Views}
& \multicolumn{4}{c}{Limited Angles} \\
\cmidrule(lr){2-5}
\cmidrule(lr){6-9}
& \multicolumn{2}{c}{Given Views}
& \multicolumn{2}{c}{Novel Views}
& \multicolumn{2}{c}{Given Views}
& \multicolumn{2}{c}{Novel Views} \\
\cmidrule(lr){2-3}
\cmidrule(lr){4-5}
\cmidrule(lr){6-7}
\cmidrule(lr){8-9}
& PSNR $\uparrow$ & MS-SSIM $\uparrow$
& PSNR $\uparrow$ & MS-SSIM $\uparrow$
& PSNR $\uparrow$ & MS-SSIM $\uparrow$
& PSNR $\uparrow$ & MS-SSIM $\uparrow$ \\
\midrule
$3\times10^{-1}$
& 30.03 & 0.909
& 29.67 & 0.902
& 29.22 & 0.911
& 26.07 & 0.818 \\

$10^{-1}$
& 38.29 & 0.989
& \underline{32.12} & \underline{0.971}
& \underline{38.61} & \underline{0.987}
& \underline{27.14} & \underline{0.860} \\

$\mathbf{10^{-2}}$
& \textbf{43.33} & \textbf{0.996}
& \textbf{32.92} & \textbf{0.976}
& \textbf{45.11} & 0.979
& \textbf{27.46} & \textbf{0.865} \\

$8\times10^{-3}$
& \underline{41.61} & \underline{0.995}
& 31.40 & \underline{0.971}
& 37.75 & \textbf{0.988}
& 26.68 & 0.852 \\
\bottomrule
\end{tabular}
\end{table*}

\subsection{3D Reconstruction}
To evaluate the proposed FAST-DIP, our second experiment is 3D CT reconstruction. 
More specifically, we experiment with sparse-views and limited-angle 3D CT reconstruction under a parallel beam CT geometry, utilizing simulated projection measurements directly provided by the XENA dataset \cite{xenadarpawebsite}. 
We evaluate reprojections of 3D reconstructions along given views and novel views to check the quality achieved by different methods.

For the sparse-view setting, we optimize the network using 20 projection views evenly distributed over a 180 degree range. For the limited-angle setting, we use 20 projection views distributed within a restricted angular range of 20 degrees, simulating realistic constrained scanning scenarios.
Before optimization, the projection data is initially reconstructed into a 3D volume using FBP. The reconstruction process is divided into two stages. First, the network parameters $\phi$ are updated sliceby slice using the Adam optimizer with a learning rate of 1e-4, thereby constructing an initial 3D reconstructed volume. In the second stage, this volume is used as the initialization for regularization, and the variable $z$ is optimized via gradient descent by introducing a fractional penalty term. We set the penalty weight to $\gamma = 0.01$ and the step size to $\beta = 0.1$.
We evaluate the FBP, ASeqDIP \cite{alkhouri2024image}, ASeqDIP with smoothed TV regularization (ASeqDIP+TV), and FAST-DIP for 3D reconstruction. For ASeqDIP+TV, we set $M^{k-1}=1$ in our Algorithm \ref{algo:mm_agd}, other settings are the same as the proposed FAST-DIP. The numerical results are presented in Table \ref{tab:3Dreconstruction-results}. Overall, our FAST-DIP achieves the best performance across most evaluation settings. In the sparse-view setting, FAST-DIP consistently achieves the best or second-best performance for both given-view and novel-view evaluations. For the more challenging limited-angle setting, the proposed FAST-DIP demonstrates clear advantages. It achieves the highest PSNR in both the given-view and novel-view cases and the best MS-SSIM in most settings, suggesting stronger robustness under severely incomplete measurements.

We present visual comparisons of novel sparse-view reconstructions in Fig. \ref{fig: 3D reconstruction}. ASeqDIP reduces part of these artifacts but still produces noisy textures and blurred structures.
For the ASeqDIP + TV method, we modify $M^{k-1}=1$ in the proposed
Algorithm \ref{algo:mm_agd}. Compared with the ASeqDIP results, the ASeqDIP + TV further suppresses noise. 
The proposed FAST-DIP produces more stable reconstructions with improved structural consistency. As highlighted in the zoomed regions (orange and yellow boxes), our approach better preserves small structures and reduces streaking artifacts compared to competing methods.

Besides, the novel limited-angle results are reported in Fig. \ref{fig: 3D limit reconstruction}. Compared with sparse-view reconstruction, the limited-angle setting is more severely ill-posed due to the missing angular information. Thus, all competing methods exhibit stronger artifacts and structural distortions. Nevertheless, the proposed FAST-DIP still maintains clear structural boundaries and more faithful geometric patterns, demonstrating stronger robustness under extremely incomplete measurements. Moreover, the improved reconstruction quality on novel views indicates that FAST-DIP generalizes better to unseen viewpoints, suggesting that it captures more intrinsic structural information of the underlying 3D object.

\subsection{Model Analysis}

To analyze the effect of the regularization parameters, we conducted ablation studies on Algorithm \ref{algo:mm_agd} with $M^{k-1}=1$ and the original Algorithm \ref{algo:mm_agd}, which is the ASeqDIP + TV and FAST-DIP. 

For ASeqDIP + TV, we evaluate different values of the regularization parameter $\lambda$. As shown in Table \ref{tab:Aseq-DIP + TV ablation combined}, moderate regularization leads to the best reconstruction quality. In particular, $\lambda = 0.001$ consistently achieves the highest PSNR and MS-SSIM across both sparse-view and limited-angle settings for given and novel views. When $\lambda$ is too large, the reconstruction suffers from excessive smoothing, which degrades structural fidelity. Conversely, overly small values weaken the regularization effect and lead to performance degradation.

For the proposed FAST-DIP, we further study the influence of the parameter $\gamma$. The results in Table \ref{tab:Ours ablation combined} show that $\gamma = 0.01$ yields the best overall performance in both sparse-view and limited-angle scenarios. Larger values tend to over-regularize the reconstruction, while smaller values reduce the effectiveness of the structural constraint. 
\section{Conclusion}
\label{sec:conclusion}
In this paper, we proposed a novel reconstruction framework that integrates deep image priors with fractional-gradient sparsity for 3D volumetric reconstruction. Building upon the input-adaptive modeling strategy of autoencoding sequential deep image prior, the proposed method leverages slice-wise deep image priors while explicitly enforcing structural consistency across slices. To better capture sparse inter-slice variations, we introduced an $\ell_1/\ell_2$-based fractional sparsity regularization on the slice-direction gradients. The proposed prior is scale-invariant and helps mitigate the over-smoothing and staircasing artifacts.
We further provided theoretical analysis for the proposed optimization scheme under the majorization–minimization framework, including monotonic descent and subsequence convergence guarantees under the Kurdyka–Łojasiewicz property. 
Experimental results on both 2D denoising and 3D reconstruction tasks demonstrate that the proposed method achieves improved reconstruction quality compared with existing DIP-based approaches while effectively preserving structural details across slices.

\section*{Acknowledgments}
This research was developed with funding from the Defense Advanced Research Projects Agency (DARPA).
The views, opinions and/or findings expressed are those of the authors and should not be interpreted as representing the official views or policies of DARPA or the U.S. Government.

\bibliography{main}
\newpage
\appendix
\onecolumn
\par\noindent\rule{\textwidth}{1pt}
\begin{center}
{\Large \bf Appendix}
\end{center}
\vspace{-0.1in}
\par\noindent\rule{\textwidth}{1pt}
\appendix

\section{Appendix: Theoretical Proofs}
\label{sec:appendix}

In this section, we provide detailed proofs for the theoretical claims in the main manuscript.\blfootnote{DARPA Distribution Statement A. Approved for public release: distribution is
unlimited.}

\subsection{Proof of Lemma \ref{lemma1} (Majorization of the Smoothed Numerator)}

\begin{proof}
We consider the smoothed numerator
\begin{equation}
N_{\delta}(\mathbf{z}) = \sum_i \sqrt{|(\nabla \mathbf{z})_i|^2 + \delta},
\end{equation}
where $\delta > 0$ is a smoothing parameter.

For each spatial index $i$, define the scalar function
\begin{equation}
h(t) = \sqrt{t+\delta}, \qquad t \ge 0.
\end{equation}
Its first and second derivatives are
\begin{equation}
h'(t) = \frac{1}{2\sqrt{t+\delta}},
\qquad
h''(t) = -\frac{1}{4}(t+\delta)^{-3/2} < 0.
\end{equation}
Hence, $h$ is strictly concave on $[0,\infty)$.

By the supporting-hyperplane property of concave functions, for any $t,t_0 \ge 0$,
\begin{equation}
h(t) \le h(t_0) + h'(t_0)(t-t_0).
\end{equation}
Substituting the expression of $h$, we obtain
\begin{equation}
\sqrt{t+\delta}
\le
\sqrt{t_0+\delta}
+
\frac{1}{2\sqrt{t_0+\delta}}(t-t_0).
\end{equation}

Now let
\begin{equation}
t = |(\nabla \mathbf{z})_i|^2,
\qquad
t_0 = |(\nabla \mathbf{z}^{k-1})_i|^2.
\end{equation}
Then, for each $i$,
\begin{equation}
\sqrt{|(\nabla \mathbf{z})_i|^2+\delta}
\le
\sqrt{|(\nabla \mathbf{z}^{k-1})_i|^2+\delta}
+
\frac{|(\nabla \mathbf{z})_i|^2 - |(\nabla \mathbf{z}^{k-1})_i|^2}
{2\sqrt{|(\nabla \mathbf{z}^{k-1})_i|^2+\delta}}.
\end{equation}

Summing the above inequality over all indices $i$ gives
\begin{equation}
\begin{aligned}
N_{\delta}(\mathbf{z})
&=
\sum_i \sqrt{|(\nabla \mathbf{z})_i|^2+\delta} \\
&\le
\sum_i
\left[
\sqrt{|(\nabla \mathbf{z}^{k-1})_i|^2+\delta}
+
\frac{|(\nabla \mathbf{z})_i|^2 - |(\nabla \mathbf{z}^{k-1})_i|^2}
{2\sqrt{|(\nabla \mathbf{z}^{k-1})_i|^2+\delta}}
\right] \\
&=
\frac{1}{2}\sum_i
\frac{|(\nabla \mathbf{z})_i|^2}
{\sqrt{|(\nabla \mathbf{z}^{k-1})_i|^2+\delta}}
+ C^{k-1},
\end{aligned}
\end{equation}
where
\begin{equation}
C^{k-1}
=
\sum_i
\left(
\sqrt{|(\nabla \mathbf{z}^{k-1})_i|^2+\delta}
-
\frac{|(\nabla \mathbf{z}^{k-1})_i|^2}
{2\sqrt{|(\nabla \mathbf{z}^{k-1})_i|^2+\delta}}
\right)
\end{equation}
is independent of $\mathbf z$.
Therefore, defining
\begin{equation}
\tilde N_{\delta}(\mathbf{z}\mid \mathbf{z}^{k-1})
=
\frac{1}{2}\sum_i
\frac{|(\nabla \mathbf{z})_i|^2}
{\sqrt{|(\nabla \mathbf{z}^{k-1})_i|^2+\delta}}
+ C^{k-1},
\end{equation}
we have
\begin{equation}
N_{\delta}(\mathbf{z}) \le \tilde N_{\delta}(\mathbf{z}\mid \mathbf{z}^{k-1}),
\end{equation}
which proves the upper-bound property.

Next, we verify the tangency condition. Setting $\mathbf{z}=\mathbf{z}^{k-1}$, we have
\begin{equation}
\begin{aligned}
\tilde N_{\delta}(\mathbf{z}^{k-1}\mid \mathbf{z}^{k-1})
&=
\frac{1}{2}\sum_i
\frac{|(\nabla \mathbf{z}^{k-1})_i|^2}
{\sqrt{|(\nabla \mathbf{z}^{k-1})_i|^2+\delta}}
+ C^{k-1} \\
&=
\sum_i \sqrt{|(\nabla \mathbf{z}^{k-1})_i|^2+\delta}
=
N_{\delta}(\mathbf{z}^{k-1}).
\end{aligned}
\end{equation}
Hence the tangency condition also holds.

Finally, by defining the diagonal weight matrix
\begin{equation}
\mathbf W^{k-1}_{i,i}
=
\frac{1}{\sqrt{|(\nabla \mathbf{z}^{k-1})_i|^2+\delta}},
\end{equation}
the quadratic part of the surrogate can be written compactly as
\begin{equation}
\frac{1}{2}\sum_i
\frac{|(\nabla \mathbf{z})_i|^2}
{\sqrt{|(\nabla \mathbf{z}^{k-1})_i|^2+\delta}}
=
\frac{1}{2}(\nabla \mathbf{z})^\top \mathbf W^{k-1} (\nabla \mathbf{z}).
\end{equation}
This completes the proof.
\end{proof}

\subsection{Proof of Lemma \ref{lemma2} (Lipschitz Continuous Gradient)}

\begin{proof}
For fixed $\phi^k$ and $\mathbf z^{k-1}$, the surrogate objective is
\begin{equation}
\mathcal L_s(\phi^k,\mathbf z \mid \mathbf z^{k-1})
=
F(\phi^k,\mathbf z)
+
S(\mathbf z \mid \mathbf z^{k-1}),
\end{equation}
where
\begin{equation}
S(\mathbf z \mid \mathbf z^{k-1})
=
\frac{\gamma}{M^{k-1}} \tilde N_{\delta}(\mathbf z \mid \mathbf z^{k-1}),
~~
M^{k-1} = \|\nabla \mathbf z^{k-1}\|_2 + \epsilon.
\end{equation}
We show that each term has Lipschitz continuous gradient with respect to $\mathbf z$ on the bounded iterate set.

First, since $S(\mathbf z \mid \mathbf z^{k-1})$ is quadratic in $\mathbf z$, its Hessian is constant and given by
\begin{equation}
\nabla_{\mathbf z}^2 S(\mathbf z \mid \mathbf z^{k-1})
=
\frac{\gamma}{M^{k-1}}
\nabla^\top \mathbf W^{k-1} \nabla.
\end{equation}
Because $\epsilon>0$, we have $M^{k-1}\ge \epsilon$. Moreover, since $\delta>0$,
\begin{equation}
\mathbf W^{k-1}_{i,i}
=
\frac{1}{\sqrt{|(\nabla \mathbf z^{k-1})_i|^2+\delta}}
\le \frac{1}{\sqrt{\delta}}.
\end{equation}
Hence
\begin{equation}
\left\|
\nabla_{\mathbf z}^2 S(\mathbf z \mid \mathbf z^{k-1})
\right\|_2
\le
\frac{\gamma}{\epsilon \sqrt{\delta}}
\|\nabla^\top \nabla\|_2,
\end{equation}
which is finite because the discrete gradient operator is linear and bounded in finite dimensions.

Next, consider
\begin{equation}
F(\phi^k,\mathbf z)
=
\|\mathbf A f_{\phi^k}(\mathbf z)-\mathbf y\|_2^2
+
\lambda \|\mathbf z - f_{\phi^k}(\mathbf z)\|_2^2.
\end{equation}
By assumption, for fixed $\phi^k$, the mapping $f_{\phi^k}(\mathbf z)$ is continuously differentiable and has bounded Jacobian and Hessian on the bounded iterate set. Since $\mathbf A$ is linear and bounded, standard chain-rule calculations imply that the Hessian $\nabla_{\mathbf z}^2 F(\phi^k,\mathbf z)$ is bounded on the same set. Therefore, there exists a constant $L_F>0$ such that
\begin{equation}
\|\nabla_{\mathbf z}^2 F(\phi^k,\mathbf z)\|_2 \le L_F.
\end{equation}
Combining the two bounds, we obtain
\begin{equation}
\|\nabla_{\mathbf z}^2 \mathcal L_s(\phi^k,\mathbf z \mid \mathbf z^{k-1})\|_2
\le
L_F + \frac{\gamma}{\epsilon \sqrt{\delta}}\|\nabla^\top \nabla\|_2
=: L.
\end{equation}
Hence the gradient $\nabla_{\mathbf z} \mathcal L_s(\phi^k,\mathbf z \mid \mathbf z^{k-1})$ is $L$-Lipschitz continuous on the bounded iterate set.
\end{proof}

\subsection{Proof of Proposition \ref{proposition1} (Monotonic Descent of the Surrogate Objective)}

\begin{proof}
By Lemma \ref{lemma2}, the gradient of the surrogate objective
\[
\mathbf z \mapsto \mathcal L_s(\phi^k,\mathbf z \mid \mathbf z^{k-1})
\]
is $L$-Lipschitz continuous. Therefore, by the standard descent lemma, for any $\mathbf u,\mathbf v$,
\begin{equation}
\begin{aligned}
\mathcal L_s(\phi^k,\mathbf u \mid \mathbf z^{k-1})
\le&
\mathcal L_s(\phi^k,\mathbf v \mid \mathbf z^{k-1})
+\\&
\left\langle
\nabla_{\mathbf z}\mathcal L_s(\phi^k,\mathbf v \mid \mathbf z^{k-1}),
\mathbf u-\mathbf v
\right\rangle
+
\frac{L}{2}\|\mathbf u-\mathbf v\|_2^2.
\end{aligned}
\end{equation}
Now set $
\mathbf v = \mathbf z^{k-1},
\mathbf u = \mathbf z^k
=
\mathbf z^{k-1}
-
\beta \nabla_{\mathbf z}\mathcal L_s(\phi^k,\mathbf z \mid \mathbf z^{k-1})\big|_{\mathbf z=\mathbf z^{k-1}}.$
Then
\begin{equation}
\mathbf z^k - \mathbf z^{k-1}
=
-\beta \nabla_{\mathbf z}\mathcal L_s(\phi^k,\mathbf z^{k-1} \mid \mathbf z^{k-1}).
\end{equation}
Substituting into the descent inequality yields
\begin{equation}
\begin{aligned}
&\mathcal L_s(\phi^k,\mathbf z^k \mid \mathbf z^{k-1})
\\\le&
\mathcal L_s(\phi^k,\mathbf z^{k-1} \mid \mathbf z^{k-1})
-
\beta
\|\nabla_{\mathbf z}\mathcal L_s(\phi^k,\mathbf z^{k-1} \mid \mathbf z^{k-1})\|_2^2 \\
&
+
\frac{L\beta^2}{2}
\|\nabla_{\mathbf z}\mathcal L_s(\phi^k,\mathbf z^{k-1} \mid \mathbf z^{k-1})\|_2^2 \\
=&
\mathcal L_s(\phi^k,\mathbf z^{k-1} \mid \mathbf z^{k-1})
-
\left(\beta-\frac{L\beta^2}{2}\right)
\|\nabla_{\mathbf z}\mathcal L_s(\phi^k,\mathbf z^{k-1} \mid \mathbf z^{k-1})\|_2^2.
\end{aligned}
\end{equation}
If $\beta \le 1/L$, then
\begin{equation}
\beta-\frac{L\beta^2}{2} \ge \frac{\beta}{2} >0,
\end{equation}
and therefore
\begin{equation}
\mathcal L_s(\phi^k,\mathbf z^k \mid \mathbf z^{k-1})
\le
\mathcal L_s(\phi^k,\mathbf z^{k-1} \mid \mathbf z^{k-1}).
\end{equation}
This proves the monotonic descent of the surrogate objective.
\end{proof}

\subsection{Proof Sketch of Theorem \ref{theorem1} (Subsequence Convergence under KL Framework)}

\begin{proof}
The proof follows a standard KL-based argument for surrogate-based alternating optimization.

First, by Proposition \ref{proposition1}, the $\mathbf z$-update yields a sufficient decrease of the surrogate objective at each iteration. By assumption, the $\phi$-update together with the $\mathbf z$-update defines a bounded surrogate-based alternating sequence that satisfies a sufficient decrease condition and a relative error condition.

Second, the involved functions are semi-algebraic. Indeed, the data fidelity term is polynomial in its arguments up to composition with the network mapping; the autoencoding term is quadratic; and the surrogate fractional term is a weighted quadratic function with weights determined from the previous iterate. Under the stated assumptions, the resulting surrogate-based objective belongs to the class of semi-algebraic (more generally, KL) functions.

Therefore, the abstract convergence results for KL functions apply: the sufficient decrease condition prevents oscillatory growth of the surrogate objective, the relative error condition controls the first-order residual by successive differences, and boundedness ensures the existence of accumulation points. Consequently, the sequence has finite length,
\begin{equation}
\sum_{k=1}^\infty
\|(\phi^{k+1},\mathbf z^{k+1})-(\phi^k,\mathbf z^k)\|_2
<\infty,
\end{equation}
and every accumulation point is a critical point of the surrogate-based alternating scheme.

We emphasize that this theorem concerns the surrogate-based alternating scheme induced by the IRLS-type approximation, rather than a strict global majorization of the original full ratio objective.
\end{proof}

\end{document}